\documentclass[twoside,BCOR=12mm]{sty/idcthesis}
\IDCThesisOptions{language=en,onehalfspacing=true,linkcolor=black!50!blue,fixfloatplacement=true}

\usepackage[latin1]{inputenc}
\usepackage{microtype}
\usepackage[titletoc]{appendix}
\usepackage{algorithm,algorithmic,cases,latexsym}
\usepackage{epsfig,graphicx,psfrag}
\usepackage{amsmath}
\usepackage{color}
\usepackage{url}
\usepackage{scrtime}
\graphicspath{{graphics/}}


%
\newcommand{\abs}[1]{\lvert#1\rvert}
\newcommand{\norm}[1]{\lVert#1\rVert}

\newtheorem{Thm}{Theorem}

\newtheorem{Prob}{Problem}
\newtheorem{T-Prob}{Transformed Problem}
\newtheorem{proposition}{Proposition}

\numberwithin{Def}{chapter}
\numberwithin{equation}{section}
\numberwithin{Prob}{chapter}
\numberwithin{proposition}{chapter}
\setupthesis{Master Thesis}
  {Practical Non-Linear Energy Harvesting Model and Resource Allocation in SWIPT Systems }
  {Elena Boshkovska}
  {LaTeX-Vorlage, Hinweise zu LaTeX, Nomenklatur}
  {September 17, 2015}

\hyphenation{Di-ver-si-ty}
\hyphenation{Do-ku-ment-ei-gen-schaf-ten}
%
\begin{document}
\begin{spacing}{1}
    \KOMAoptions{cleardoublepage=empty}
    \pagestyle{empty} \pagenumbering{roman} \setcounter{page}{1}
\newlength{\logoheight}\setlength{\logoheight}{20mm}
\newlength{\logomargin}\setlength{\logomargin}{35mm}

\addcontentsline{toc}{chapter}{\titlename}
\begin{titlepage}
\begin{tikzpicture}[remember picture, overlay]
  \begin{scope}[every node/.style={text badly centered,text width=1.1\textwidth}]
    \path (current page.north)
      +(0mm,-40mm)  node[font=\Large] {\Thesis}
      +(0mm,-60mm)  node[font=\huge\bfseries,minimum height=5cm] {\Title}
      +(0mm,-95mm)  node[font=\large] {\Author}
      +(0mm,-135mm) node[font=\large] {
        \textbf{Lehrstuhl f\"{u}r Digitale \"{U}bertragung}\\
        Prof. Dr.-Ing. Robert Schober\\
        Universit\"{a}t Erlangen-N\"{u}rnberg\\
      }
      +(0mm,-175mm) node {
        \begin{tabular}{ll}
          Supervisor: & Dr. Derrick Wing Kwan Ng\\
                    &  Prof. Dr.-Ing. Robert Schober\\
        \end{tabular}
      }
      +(0mm,-225mm) node {\Date}
    ; 
  \end{scope}
  \node[shift={(+\logomargin,1.5\logoheight)},anchor=west] at (current page.south west){%
    \includegraphics[height=\logoheight]{/titlepage/IDC_logo}%
  };
  \node[shift={(-\logomargin,1.5\logoheight)},anchor=east] at (current page.south east){%
    \includegraphics[height=\logoheight]{/titlepage/fau_tf_logo}%
  };
\end{tikzpicture}
\end{titlepage}

    \tableofcontents

\end{spacing}
\addchap{\abstractname}
Simultaneous wireless information and power transfer (SWIPT) is a promising solution for enabling long-life, and self-sustainable wireless networks. In this thesis, we propose a practical non-linear energy harvesting (EH) model and design a resource allocation algorithm for SWIPT systems. In particular, the algorithm design is formulated as a non-convex optimization problem for the maximization of the total harvested power at the EH receivers subject to quality of service (QoS) constraints for the information decoding (ID) receivers. To circumvent the non-convexity of the problem, we transform the corresponding non-convex sum-of-ratios objective function into an equivalent objective function in parametric subtractive form. Furthermore, we design a computationally efficient iterative resource allocation algorithm to obtain the globally optimal solution. Numerical results illustrate significant performance gain in terms of average total harvested power for the proposed non-linear EH receiver model, when compared to the traditional linear model.\\

\underline{Publications related to the thesis}:

\begin{itemize}
\item  {\bf E. Boshkovska},  D. W. K. Ng,  N. Zlatanov, and R. Schober, ``Practical Non-linear Energy Harvesting Model and Resource Allocation for SWIPT Systems," IEEE Commun. Lett., vol. 19, pp. 2082 - 2085, Dec. 2015.
    \item  {\bf E. Boshkovska}, R. Morsi, D. W. K. Ng,   and R. Schober, ``Power Allocation and Scheduling for SWIPT Systems with Non-linear Energy Harvesting Model,"  accepted for presentation at IEEE ICC 2016.
        \item  {\bf E. Boshkovska},  D. W. K. Ng,  N. Zlatanov, and R. Schober, ``Robust Beamforming  for SWIPT systems with
Non-linear Energy Harvesting Model,"  invited paper,  17th  IEEE International workshop on Signal Processing  advances in Wireless Communications, Jul. 2016.
\end{itemize} 

\begin{spacing}{1}
    \addchap{\glossaryname}
\newcommand{\glossaryfirstcolumnlength}{\hspace{6.1em}}
\addsec{\abbreviationsname}
\begin{acronym}[\glossaryfirstcolumnlength]
\acro{AC}{Alternating Current}
\acro{AWGN}{Additive White Gaussian Noise}
\acro{CMOS}{Complementary Metal-oxide-semiconductor}
\acro{CSI}{Channel State Information}
\acro{DC}{Direct Current}
\acro{EH}{Energy Harvesting}
\acro{ER}{Energy Harvesting Receiver}
\acro{FDD}{Frequency Division Duplex}
\acro{GSM}{Global System for Mobile Communications}
\acro{ID}{Information Decoding}
\acro{ISM}{Industrial, Scientific, and Medical}
\acro{MISO}{Multiple-Input Single-Output}
\acro{MIMO}{Multiple-Input Multiple-Output}
\acro{NP-hard}{Non-deterministic Polynomial-time hard}
\acro{QoS}{Quality of Service}
\acro{RF}{Radio Frequency}
\acro{SINR}{Signal-to-interference-plus-noise Ratio}
\acro{SWIPT}{Simultaneous Wireless Information and Power Transfer}
\acro{TDD}{Time Division Duplex}
\acro{WIPT}{Wireless Information and Power Transfer}
\acro{WPN}{Wireless Powered Network}
\acro{WPT}{Wireless Power Transfer}

\end{acronym}
\addsec{\operatorsname}
\begin{symbollist}{\glossaryfirstcolumnlength}
  \sym{$\mathbf{0}$}{All-zero matrix}
  \sym{$E\{\cdot\}$}{Statistical expectation}
  \sym{$\abs{\cdot}$}{Absolute value}
  \sym{$\norm{\cdot}$}{Euclidean norm}
  \sym{$[x]^+$}{max\{0,x\}}
  \sym{$[x]^{-1}$}{Matrix inverse}
  \sym{$[x]^{T}$}{Matrix transpose}

\end{symbollist}
\addsec{\symbolsname}
\begin{symbollist}{\glossaryfirstcolumnlength}
  \sym{$\cal{B}$}{Bandwidth}
 \sym{$K$}{Number or users in the system}
  \sym{$T$}{Number of time slots}
  \sym{$P_{\mathrm{av}}$}{Average radiated power}
  \sym{$P_{\mathrm{max}}$}{Maximum radiated power at one time slot}
  \sym{$\eta$}{RF-to-DC power conversion efficiency}
\end{symbollist}
\clearpage

 \listoffigures
\end{spacing}

\chapter{Introduction}
\label{chap:1_Introduction}
\pagenumbering{arabic} \setcounter{page}{1}
Ever since wireless networks have been deployed in our surroundings, there has been an exponential growth of the data rate requirements that these networks need to satisfy along with the increasing demand for new and improved services. Under this content, several significant technologies, such as multiple-input multiple-output (MIMO), capacity achieving codes, and small-cell networks, have been proposed to tremendously increase the speeds in wireless networks \cite{Cuttingwires}. However, the demands in high quality of service (QoS) increase the amount of required energy that wireless networks need to operate, in both the transmitters to the end users, i.e., the mobile devices. The bottleneck that slows down the evolution of communication networks is mainly at the mobile devices, due to their limited energy supply. In particular, the development of battery capacity has not been keeping up with the evolution of other network constituents. In the last decade, extensive research has been conducted to study alternative solutions that might offer ways to surpass the limitations caused by batteries.  An appealing solution is energy harvesting (EH), which has become very popular in the field of communications for enabling self-sustainable mobile devices \cite{Krikidis2014}. With the intelligence to harvest energy from different sources, such as solar and wind, the lifetime of communication networks can be increased along with enabling self-sustainability at the mobile terminals. However, these natural sources have limited availability and are usually constrained by weather and geographical location. One of the possible solutions to go beyond these limitations is via the concept of wireless power transfer (WPT), which was first introduced in Tesla's work \cite{tesla1914apparatus}, published in the early 20th century. Yet, researchers started investigating the possibilities of using WPT for charging end-user devices in wireless networks decades later \cite{brown_microwave}. The opportunity rose due to the rapid advancement of microwave technologies in the 1960s, along with the invention of rectifying antennas. The energy in WPT can be harvested from either ambient radio frequency (RF) signals, or in a dedicated manner from more powerful energy sources, e.g. base stations \cite{Krikidis2014}. In the last decades, due to the increasing number of wireless communication devices and sensors, the focus has been set on recycling power from an omnipresent source of energy, i.e., harvesting power\footnote{In this work, normalized unit energy is considered, i.e., Joule-per-second. In other words, the terms ``energy" and ``power" are interchangeable} from the energy of RF signals. Recently, simultaneous wireless information and power transfer (SWIPT) has drawn much attention in the research community \cite{seah2009wireless}--\nocite{RF_EH_survey}\cite{zhang_hypothesis}. In order to unify the transmission of information along with the process of EH, the receivers in SWIPT systems reuse the energy that the RF signals carry in order to supply the batteries of mobile devices while decodes the information successfully. In the case when certain users are the energy harvesters and others are information receivers, the concept is referred to as wireless information and power transfer (WIPT). Another similar emerging concept is the wireless powered network (WPN) \cite{CN:tao_2015,ju2014throughput,JR:WPCN_QQ}, where the receivers rely solely on the power harvested from the appointed transmitter and use that power for their future transmissions. In the following, we focus on SWIPT/WIPT systems.

In this chapter, we give an overview of SWIPT, along with some specifics of receiver modelling in SWIPT systems. Then, we state the motivation of the thesis.

\section{Simultaneous Wireless Information and Power Transfer}

EH is a promising solution for overcoming the limitations introduced by energy-constrained mobile devices. Moreover, when considering RF signals as an energy harvesting source, we have an omnipresent, relatively stable, and controllable source of energy \cite{CN:Shannon_meets_tesla, Ding2014}. The harvested energy from the RF signals can be recycled and used as a supply to the mobile devices in both indoor and outdoor environments. With existing EH circuits available nowadays, we are able to harvest microwatts to milliwatts of power from received RF signals over the range of several meters for a transmit power of $1$ Watt and a carrier frequency less than $1$ GHz \cite{Powercast}. Thus, RF signals can be a viable energy source for devices with low-power consumption, e.g. wireless sensor networks \cite{sudevalayam2011energy}. In addition, we have the possibility to transmit energy along with the information signal \cite{CN:WIPT_fundamental}, which is known as SWIPT.

The receivers in a SWIPT system have the possibility to decode the transmitted information and also harvest power that would be stored in their batteries for future use. Ideally, the receivers in a SWIPT system would process the information at the same time while harvesting energy from the same signal \cite{CN:Shannon_meets_tesla, CN:WIPT_fundamental}. However, due to practical limitations, the EH receiver cannot reuse the power from the signal intended for decoding in general. As a result, separate receivers that decouple the processes of information decoding (ID) and EH using different policies have been presented in \cite{CN:WIP_receiver}--\nocite{jabbar2010rf,CN:MIMO_WIPT,CN:multiuser_OFDM_WIPT,
leng2014power,ng2013multi,ng2014secure,ng2014resource,CN:tao_2015,CN:Maryna_2015,CN:Vicky_MOOP_SWIPT,CN:Vicky_FD_SWIPT}\cite{ng2014max}. One of the approaches for realizing this goal is implementing a power splitting receiver. Specifically, the power splitting receiver splits the power of the incoming signal into two power streams with power splitting ratios $1-\rho$ and $\rho$, for EH and ID, respectively. The power splitting ratio $0 \leq \rho \leq 1$ is previously determined for the power splitting unit, which is installed at the analog front-end of the receiver. Power splitting receivers in the context of SWIPT systems have been studied widely in the literature \cite{JR:WIPT_fullpaper_OFDMA}--\nocite{ng2013resource,ng2013spectral,ng2013energy}\cite{ng2014robust}. Since we introduce the EH capability to the receiver side, a trade-off between ID and EH arises naturally in such systems. Therefore, new resource allocation algorithms that satisfy the requirements of SWIPT systems were investigated in  \cite{CN:WIP_receiver}--\nocite{CN:MIMO_WIPT,JR:multiuser_MISO,JR:beamforming_MISO,
Xu2013,ng2013spectral,ng2014max,CN:tao_2015,JR:WIPT_relaying_timeswitching}\cite{JR:WIPT_CR}. The fundamental trade-off between channel capacity and the amount of harvested energy, considering a flat fading channel and frequency selective channel, was studied in \cite{CN:Shannon_meets_tesla,CN:WIPT_fundamental,CN:WIP_receiver,CN:MIMO_WIPT}. Moreover, \cite{JR:multiuser_MISO} and \cite{JR:beamforming_MISO} focused on transmit beamforming design in multiple-input single-output (MISO) SWIPT systems for separated and power splitting receivers, respectively. Additionally, the concept of SWIPT was included in MIMO system architectures in \cite{CN:strategies_twouserMIMO,chen2013energy}. The optimization of beamformers with the objective to maximize the sum of total harvested energy under the
minimum required signal-to-interference-plus-noise ratio (SINR) constraints for multiple information receivers was considered in \cite{Xu2013}. In \cite{CN:multiuser_OFDM_WIPT,JR:WIPT_fullpaper_OFDMA,ng2013spectral,ng2013energy,chen2013energy}, resource allocation algorithms for the maximization of the energy efficiency and spectral efficiency were developed in different network architectures including SWIPT. These works have shown that the energy efficiency can be improved by employing SWIPT in the considered communication systems. In more recent works, \cite{CN:Maryna_2015, Morsi2014}, the authors proposed multiuser scheduling schemes, which exploit multiuser diversity for improving the system performance of multiuser SWIPT systems. Besides, SWIPT has also been considered in cooperative system scenarios  \cite{JR:WIPT_relaying_timeswitching,JR:WIPT_AntSwitch}, where the performance of SWIPT systems is analysed by considering different relaying protocols. Another aspect that is widely studied in the literature is improving communication security in SWIPT systems \cite{leng2014power}--\nocite{ng2013multi}\cite{ng2014secure}, \cite{ng2014max,ng2014robust}. Namely, in order to facilitate EH at the receivers in SWIPT systems, the transmit power is usually increased. Due to that fact, the susceptibility to eavesdropping might also be increased. As a result, authors in \cite{JR:secure_WIPT_MISO_ruizh}--\nocite{ JR:AN_MISO_secrecy,JR:eff_secure_ofdma_kwan,JR:robust_secure_CR_kwan,JR:secure_ofdma_DFrelay_kwan,JR:ng2014multi}\cite{JR:Kwan_SEC_DAS} designed algorithms that provide physical layer security in SWIPT systems. Furthermore, SWIPT has also been introduced in cognitive networks \cite{JR:WIPT_CR,JR:robust_secure_CR_kwan,JR:ng2014multi}, where cooperation between the primary and secondary systems in a cognitive radio network with SWIPT was investigated. The abundance of research demonstrated above implies that SWIPT leads to significant gains in many aspects, for instance, energy consumption, spectral efficiency, and time delay. Therefore, SWIPT is a novel concept that unlocks the potential of RF energy for developing self-sustainable, long-life, and energy-efficient wireless networks.

\section{Receiver Modelling in Wireless Information and Power Transfer Systems}

In this section, we focus on a widely adopted receiver model for EH in WIPT systems. In general, a WIPT system consists of a transmitter of the RF signal, e.g. a base station, that broadcasts the signal to the receivers, cf. Figure \ref{fig:wpt_system}, as well as a receiver RF energy harvesting node. After the signal has been received at the EH node, a chain of elements process the signal as follows. The bandpass filter employed after the receiver antenna performs the required impedance matching and passive filtering, before the RF signal is passed to the rectifying circuit. The rectifier is a passive electronic device, usually comprising diodes, resistors, and capacitors, that converts the incoming RF power to direct current (DC) power, which can be stored in the battery storage of the receiver. After the rectifier, usually a low-pass filter follows, in order to eliminate the harmonic frequencies and prepare the power for storage.
\begin{figure}[h!]
	   \center{
	   \includegraphics[scale=0.53]{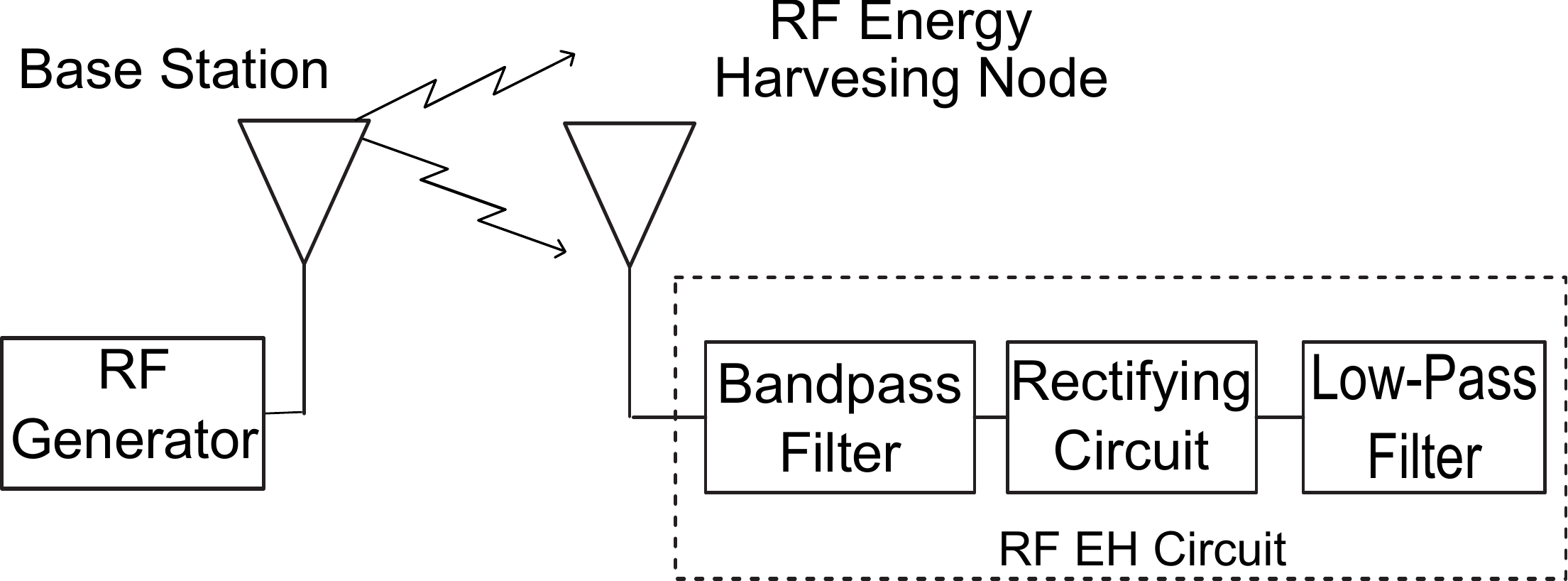}}
	   \caption{A point-to-point WPT system.}
	   \label{fig:wpt_system}
\end{figure}

The end-to-end power conversion depends greatly on the characteristics of the rectifying circuit. The rectifier can be implemented with different non-linear circuits, starting from the simplest half wave rectifiers, cf. Figure \ref{fig:rect}, to complicated circuits that offer N-fold increase of the circuit's output and boost the efficiency of the circuit, cf. Figure \ref{fig:dickson}. Half-wave rectifiers, as depicted in Figure~\ref{fig:rect}, pass either the positive or negative half of the alternating current (AC) wave, while the other half is blocked \cite{power_electronics}. Even though they result in a lower output voltage, the half-wave rectifiers comprise only one diode in the simplest case. Thus, the half-wave rectifier is the simplest form of rectifier, which is suitable for small mobile devices or wireless sensors. Figure \ref{fig:dickson} depicts an array of voltage doubler circuits, where each part of the circuit consists of two diodes and other corresponding elements. Depending on the number of stages required for a particular rectifier, the circuit parts can be repeated until the N-th element is reached. This configuration offers an increase of the conversion efficiency of the circuit, as well as reducing the negative effects of a single circuit part. The rectifying circuits and their optimization have been a research topic for decades \cite{brown_microwave}, although recently more attention has been drawn to them, due to their important role in WIPT/SWIPT systems \cite{valenta2014harvesting}.
\begin{figure}
	   \center{
	   \includegraphics[scale=1.6]{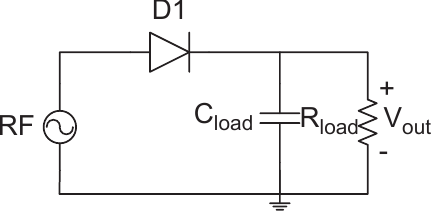}}
	   \caption{A schematic of a half-wave rectifier; $C_{\text{load}}$ - Load capacitance, $R_{\text{load}}$ - load resistance, $D_1$ - diode, and $V_{\text{out}}$ - output voltage.}
	   \label{fig:rect}
\end{figure}
\begin{figure}
	   \center{
	   \includegraphics[scale=1.4]{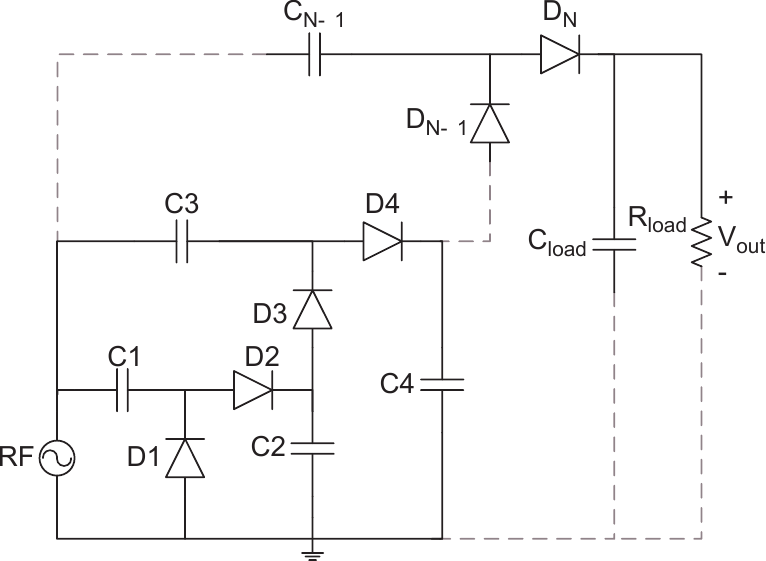}}
	   \caption{A schematic of a Dickson charge pump; $D_1$\textendash $D_N$ diodes, and $C_1$\textendash $C_N$ capacitors in stages $1$\textendash $N$.}
	   \label{fig:dickson}
\end{figure}

In \cite{valenta2014harvesting}, different configurations of rectifying circuits and an illustration of their efficiency of converting the input RF power to output DC power have been presented. On the other hand, the authors in \cite{le2008efficient} have developed a particular rectifier, suited for the Global System for Mobile Communications (GSM) frequency band, which was optimized to result in maximal conversion efficiency. Thus, the resulting configuration that was built is a rectifying circuit with 36 stages in complementary metal-oxide semiconductor (CMOS) technology. Another work in \cite{ungan2009rf} analyzed a circuit configuration that attempts to maximize the input before rectification by using a high-Q resonator preceding the rectifier. Moreover, authors in \cite{agrawal2014realization} have studied three different techniques for impedance matching and their influence on the efficiency of the rectifier. Design of a dual-band rectifier for WIPT, whose efficiency is optimized in both the 2.4 GHz and 5.8 GHz industrial, scientific and medical (ISM) bands, was presented in \cite{wang2013design}. Furthermore, the authors in \cite{guo2012improved} proposed a new method for analytical calculation of the efficiency of microwave rectifiers. Most circuits shown in the literature comprise different elements and are constructed in slightly different configurations. Despite the abounding research in EH circuit design and their optimization, there is still no general and tractable mathematical representation of the input-output response of a rectifying circuit.

On the other hand, we expect that the input-output response of the EH circuit is non-linear, considering that in any possible configuration, the rectifying circuit has at least one non-linear element, such as the diode or diode-connected transistor. The most important parameter that describes the capability of the rectifying circuit is the RF-to-DC conversion efficiency. In general, the conversion efficiency is defined as the ratio between the output DC power and the input RF power:
\begin{equation}
\eta = \frac{P_{\text{DC-out}}}{P_{\text{RF-in}}},
\label{eq_1_2_1}
\end{equation}
where $P_{\text{RF-in}}$ is the power of the RF signal that enters the rectifier and $P_{\text{DC-out}}$ is the converted output DC power.
The relationship that the efficiency described is shown to be non-linear, due to the non-linear nature of the circuit itself. This non-linearity  is observed in all the measurements presented in \cite{valenta2014harvesting}--\nocite{le2008efficient,ungan2009rf,agrawal2014realization,wang2013design}\cite{guo2012improved}, which were performed using practical EH circuits. Similar non-linear behaviour also appears when we observe the output DC power with respect to the input RF power, because they are also connected through the conversion efficiency of the circuit. As aforementioned, the problem of modelling the relationship between the input and output power of a rectifier through a general expression has not been reported in the literature, yet. However, an accurate and tractable model is necessary in order to include the effect of practical rectifying circuits on the harvested power at the EH receivers, when working with SWIPT communication systems.

In many recent works related to EH in communications, a specific linear model has been assumed for describing the harvested power after the rectifying circuit \cite{CN:WIP_receiver}--\nocite{jabbar2010rf,CN:MIMO_WIPT,CN:multiuser_OFDM_WIPT,leng2014power,ng2013multi,ng2014secure,ng2014resource,CN:tao_2015,CN:Maryna_2015}\cite{ng2014max}. In particular, the output power is related to the input power through the conversion efficiency $\eta $ \cite{CN:WIP_receiver}:
\begin{equation}
P_{\text{DC-out}} = \eta P_{\text{RF-in}}.
\label{eq_1_2_2}
\end{equation}
Furthermore, $\eta $ is a constant that can take on values in the interval $[0,1]$ and is supposed to represent the capability of the RF-to-DC conversion circuit. The authors in \cite{CN:WIP_receiver}--\nocite{jabbar2010rf,CN:MIMO_WIPT,CN:multiuser_OFDM_WIPT,leng2014power,ng2013multi,ng2014secure,ng2014resource,CN:tao_2015,CN:Maryna_2015}\cite{ng2014max}, as well as many others, assume the same model as in \eqref{eq_1_2_2} for representing the harvested power after the RF signal has been received and processed.
Through \eqref{eq_1_2_2}, a linear behaviour between the input and output power is introduced in the system. With this model, the power conversion efficiency is independent of the input power level of the EH circuit. In practice, the end-to-end wireless power transfer is non-linear and is influenced by the parameters of the practical EH circuits, which are built using at least one non-linear element, as it was previously shown. Thus, the linear assumption for the conversion efficiency and for the EH receiver model does not follow the actual characterization of practical EH circuits in general. More importantly, significant performance losses may occur in SWIPT systems, when the design of resource allocation algorithm is based on an inaccurate linear EH model.

\section{Motivation}

This thesis is motivated by the inaccuracy of the traditional linear EH receiver model to capture the non-linear characteristic of the RF-to-DC power conversion in practical RF EH systems. Specifically, the use of the conventional linear EH model may lead to resource allocation mismatch in SWIPT systems,  resulting in losses in the amount of total harvested energy in the system.

In this thesis, we first focus on modelling a practical EH receiver circuit, which is fundamentally important for the design of resource allocation algorithm in SWIPT systems. To this end, an accurate and tractable EH model, which reflects the non-linear nature of the practical EH circuit, is proposed. Alongside this model, we design a resource allocation algorithm for the maximization of the total harvested power at the EH receivers in the system, subject to QoS constraints. Furthermore, the proposed practical non-linear model is compared to the existing linear EH model used in the literature.

The rest of the thesis is organized in the following manner. In Chapter  \ref{chap:2_EH_Model}, we introduce the communication system model adopted in the thesis. Afterwards, we propose a practical non-linear EH model which is used in the resource allocation algorithm design. Then, the results from the simulation framework are presented. Finally, we summarize the contributions of this thesis in Chapter \ref{chap:3_conclusion}.

%
\chapter{Resource Allocation Algorithm for a Practical EH Receiver Model}
\label{chap:2_EH_Model}
In this chapter, we focus on designing a resource allocation algorithm for a practical EH receiver model in a SWIPT system. To this end, we first propose a practical non-linear EH receiver model, which we adopt as an objective function for the design of the resource allocation algorithm. We aim to maximize the average total harvested power at the EH receivers in the system under some QoS constraints. The optimization problem is formulated as a non-convex sum-of-ratios problem. After transforming the considered non-convex objective function in sum-of-ratios form into an equivalent objective function in parametric subtractive form, we present a computationally efficient iterative resource allocation algorithm for achieving the globally optimal solution. At the end of the chapter, numerical results for the underlying simulation framework are presented, where the proposed EH receiver model is compared to the existing linear EH receiver model.

\section{System Model}\label{system_model}

The system model for this work is depicted in Figure~\ref{fig:system_model}. We focus on a downlink multiuser system, where a single-antenna base station broadcasts the RF signal to $K$ single-antenna users, which are capable of ID and EH. We assume that the users have additional power supply, such that they do not rely solely on the RF EH for their battery supply. Transmission in the system is divided into $T$ unit time slots. For each time slot $n$ and each user $k$, we perform joint user selection and power allocation to optimize the system performance. As for the channel model, we assume a frequency flat slow fading channel.
\begin{figure}
\centering{\includegraphics[scale=0.5]{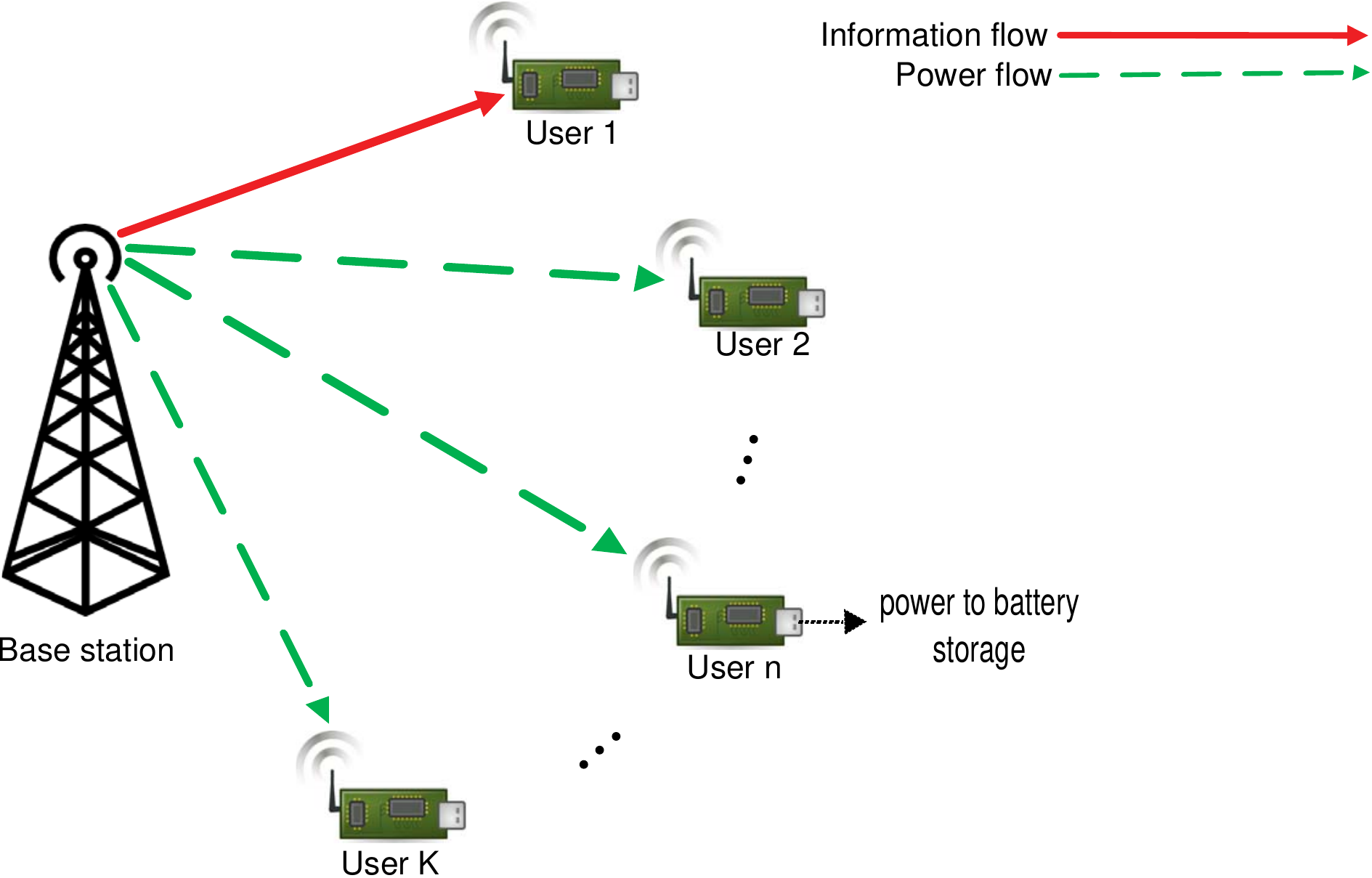}}
\caption{A multiuser SWIPT communication model.}
\label{fig:system_model}
\end{figure}
The channel impulse response is assumed to be time invariant during each time slot $n$. Thus, the downlink channel state information (CSI) can be obtained by exploiting feedback from users in frequency division duplex (FDD) systems and channel reciprocity in time division duplex (TDD) systems. The base station excutes the resource allocation policy at each time slot $n$, based on the available CSI. Moreover, at each time slot $n$ the downlink received symbol at user $k$ is given by
\begin{equation}
\label{eq_2_1_1}
y_k(n) = \sqrt{P_k(n) h_k(n)}x_k(n) + z_k(n),
\end{equation}
where $x_k(n)$ is the transmitted symbol, $P_k(n)$ is the transmitter power, and $h_k(n)$ is the channel gain coefficient describing the joint effects of multipath fading and path loss, for user $k$ at time slot $n$. For the transmitted symbol, we assume a zero mean symbol with variance $\mathbb{E}\{|x_k(n)|^2\} = 1, \forall n,k$, where $\mathbb{E}\{\cdot\}$ stands for statistical expectation. $z_k(n)$ represents the additive white Gaussian noises (AWGN) for time slot $n$ and user $k$ with zero mean and equal variance $\sigma^2$. Given perfect CSI at the user, the instantaneous capacity for user $k$ and time slot $n$ is defined by
\begin{equation}\label{eq_2_1_2}
C_k(n) = \log_2 \bigg( 1 + \frac{P_k(n)h_k(n)}{\sigma^2}\bigg).
\end{equation}
At each time slot, only a single user is chosen to receive the information, i.e., to perform ID, while the other $K-1$ users can opportunistically harvest energy from the signal that is radiated from the base station. Considering the fact that we focus on maximizing the overall harvested power, in the following we focus only on the users selected for EH, while also satisfying the QoS constraints for the ID users.

At the EH receiver, the users receive the signal through their antennas, which are assumed to have ideal impedance matching. Then, the RF signal goes through the rectification process, which converts the incoming RF power into output DC power. For this part, instead of adopting the existing linear model for modelling the DC output power, a non-linear conversion function for a practical EH receiver model is proposed. The proposed power conversion function captures the effect of the practical rectifier on the end-to-end RF-to-DC power conversion.

\section{Practical EH Receiver Model Proposition}\label{practical_model}

In this section, we propose a non-linear function that describes the input-output response of a practical EH receiver. As it was previously elaborated in Chapter \ref{chap:1_Introduction}, the existing linear model for the EH circuit does not capture the end-to-end non-linearity of a practical EH receiver in a WIPT system and can lead to resource allocation mismatch for the corresponding system. This can be avoided by adapting the model to the practical EH circuits. For this purpose, we propose to use a logistic (sigmoidal) function, which is a special kind of quasi-concave functions, to model the input-output characteristic of the EH circuits. Its standard shape is shown in Figure~\ref{fig:logistic_func}, and the general analytical expression has the following form:
\begin{equation}\label{eq_2_2_1}
f(x) = \frac{M}{1+e^{-a(x-b)}}.
\end{equation}
In \eqref{eq_2_2_1}, $M$ is the parameter that describes the value of the asymptote when $\text{input} \, x \, \rightarrow \infty$, $a$ shows the steepness of the curve, and $b$ is the mid-point of the curve.
\begin{figure}[h!]
\centering{\includegraphics[scale=0.5]{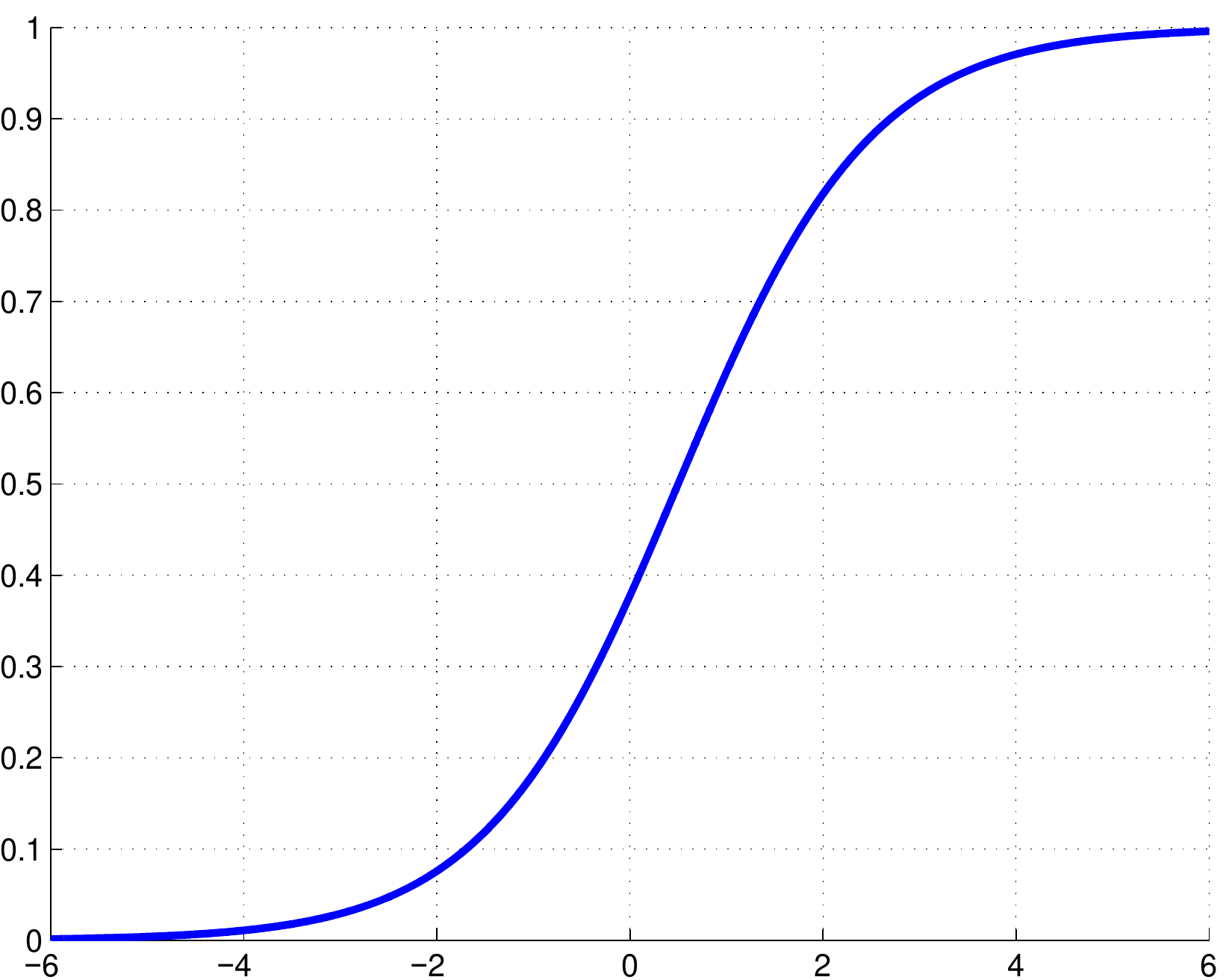}}
\caption{A standard logistic function.}
\label{fig:logistic_func}
\end{figure}
The logistic function can also take many different forms, with more or less parameters, depending on the model and the specific application. It is used in many different fields of science, for instance, modelling population growth, machine learning, as a utility function in networking etc..

To facilitate the development of a practical model for the end-to-end power conversion in a practical EH circuit, we transform \eqref{eq_2_2_1}  into a slightly different form of the logistic function:
\begin{equation}
P_{DC} = \frac{M \big( \frac{1}{1+e^{-a(P_{RF}-b)}}-\frac{1}{1+e^{ab}} \big) }{1-\frac{1}{1+e^{ab}}}.
\label{eq_2_2_2}
\end{equation}
In \eqref{eq_2_2_2}, $P_{DC}$ is the output DC power, while $P_{RF}$ represents the RF power from the RF signal that enters the rectifier, after the RF signal has been received and processed. We note that equation \eqref{eq_2_2_2} takes into account the zero-input/zero-output response of EH circuits \cite{xiao2003utility}, which cannot be modelled by the function in \eqref{eq_2_2_1}. Constants $M$, $a$, and $b$ in \eqref{eq_2_2_2} describe the behaviour of the curve and comply with the general definition of the parameters in the initial form of the function in \eqref{eq_2_2_1}. The value of $M$ is related to the maximum output DC power, i.e., the maximum power that can be harvested through a particular circuit configuration. $a$ and $b$ show the steepness and the inflexion point of the curve that describes the input-output power conversion. Moreover, $b$ is related to the minimum required turn-on voltage for the start of current flow through the diode \cite{valenta2014harvesting}, while $a$ reflects the non-linear charging rate with respect to the input power. In general, these parameters depends on the choice of hardware components for assembling the rectifier. Yet, once the EH circuit is fixed, the parameters can be easily estimated through a curve fitting of the measurement data.

In the following, we verify the accuracy of the proposed model for the EH receiver by reviewing the measurement results of rectifying circuits presented in \cite{valenta2014harvesting}--\nocite{le2008efficient,ungan2009rf,agrawal2014realization,wang2013design}\cite{guo2012improved}. More specifically, we collect the data for the input versus output power and perform curve fitting of the measurements. The results shown in Figures \ref{fig:curve_fit1}\textendash \ref{fig:curve_fit3} were obtained using the transformed form of the logistic function, defined in \eqref{eq_2_2_2}.
\begin{figure}
\centering{\includegraphics[scale=0.8]{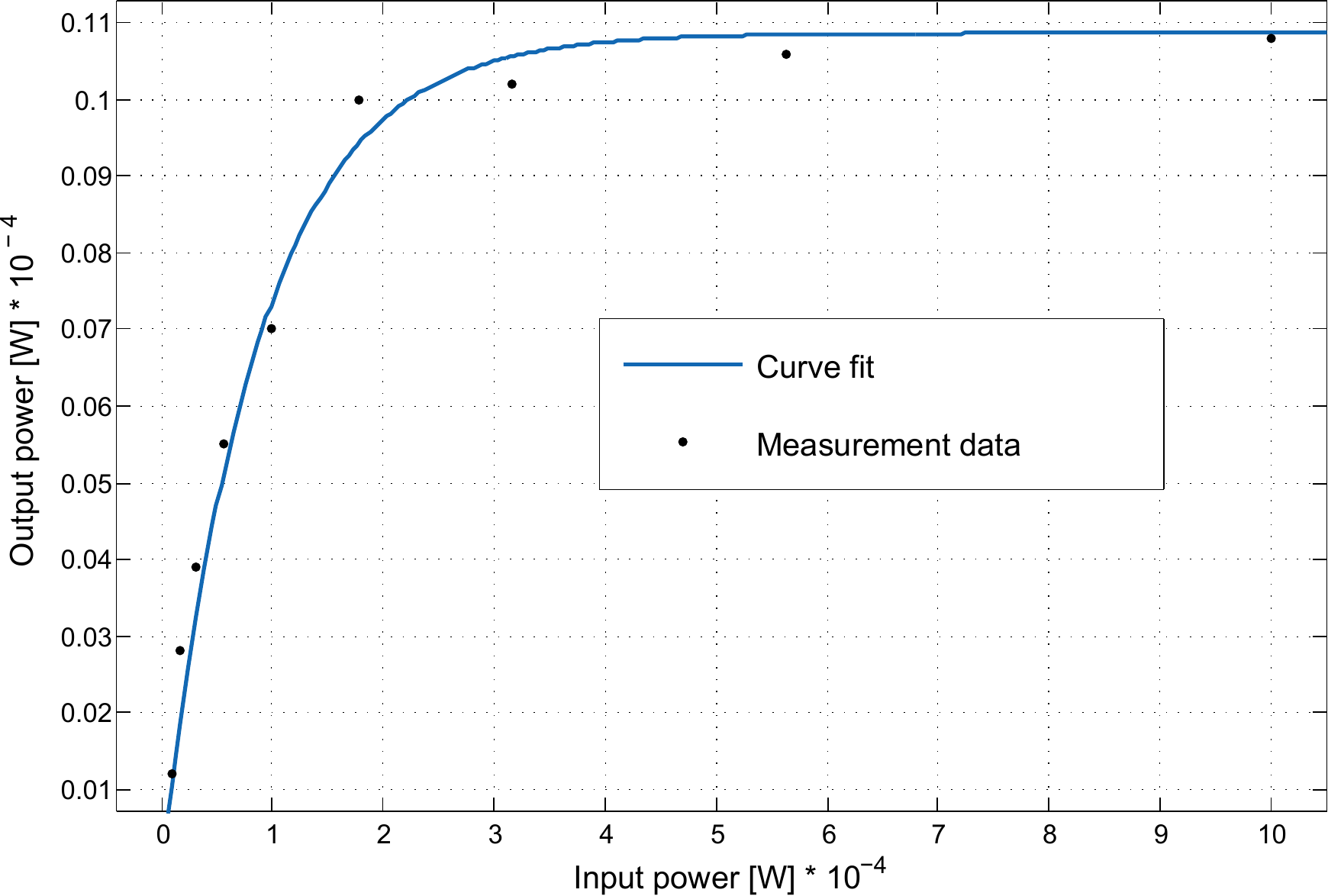}}
\caption{Curve fitting of measurement data from \cite{le2008efficient}.}
\label{fig:curve_fit1}
\vspace*{5mm}
\centering{\includegraphics[scale=0.8]{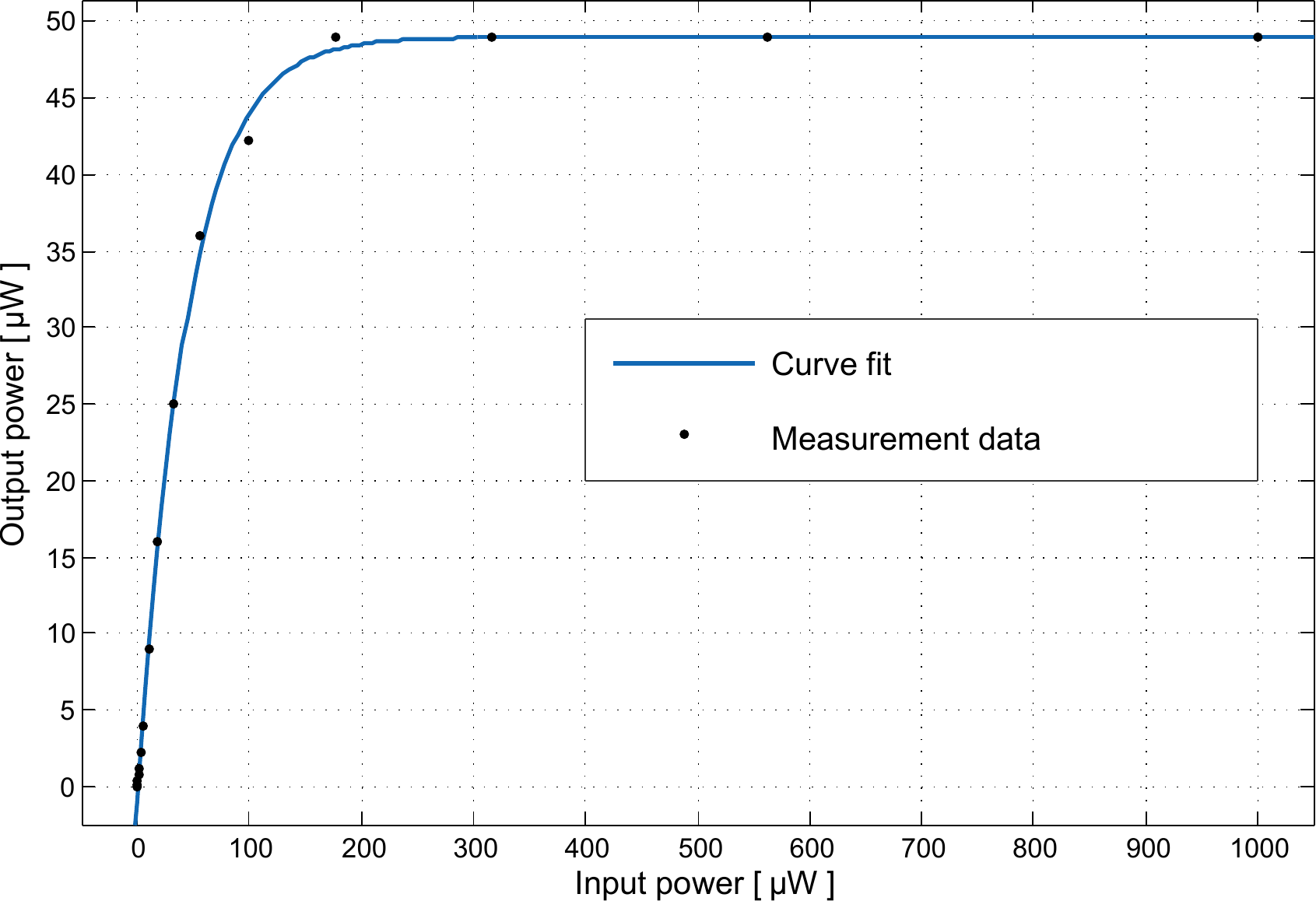}}
\caption{Curve fitting of measurement data from \cite{ungan2009rf}.}
\label{fig:curve_fit2}
\end{figure}
The curve fitting was done using the Curve Fitting Toolbox, available in MATLAB \cite{MATLAB:2012}, with relatively high accuracy. In particular, the average value of the curve fitting output parameter Adjusted R-squared (Adjusted-$R^2$) for the curves in Figures~\ref{fig:curve_fit1}\textendash \ref{fig:curve_fit3} is shown to be $0.9858$. This parameter is a special version of the coefficient of determination $R^2$. In particular, the ordinary $R^2$ is a statistical measure of how close the data are to the fitted curve and represents the ratio between explained variation and total variation of data. Adjusted-$R^2$ is modified to include the number of observation and regression coefficients, in order to result in a more accurate measure for the goodness of the fit. Both parameters result in values from $0$ to $1$, where $1$ indicates the best fit.
\begin{figure}
\centering{\includegraphics[scale=0.7]{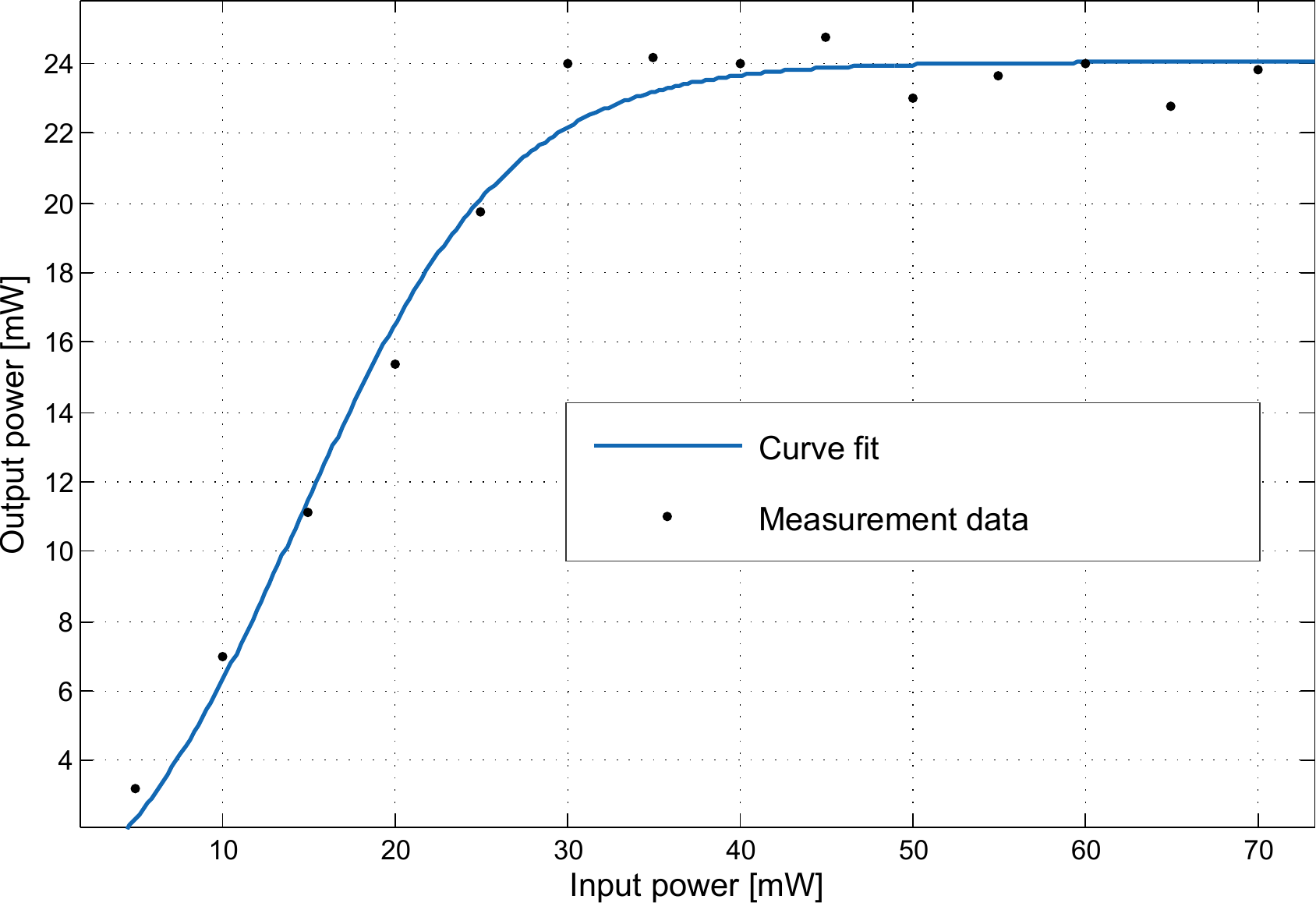}}
\caption{Curve fitting of measurement data from \cite{guo2012improved}.}
\label{fig:curve_fit3}
\end{figure}

As mentioned before, the figures show the output DC power with respect to the input RF power in Watts, for different power ranges corresponding to different rectifying circuit architectures. It can be observed that the behaviour of the curves in the figures is relatively similar. As the input power increases, the output power also starts to increase after some point, known as the sensitivity of the rectifying circuit. After that, the output power continues to increase until it reaches the saturation region. Due to the limitations that take place in the rectifying circuit, the output DC power cannot surpass this saturation value \cite{valenta2014harvesting}. This limit value is one of the most important differences to the linear model, where it was unrealistically assumed that the output DC power can linearly increase arbitrarily, when the input RF power is increasing. Taking into consideration that the behaviour of the power conversion curves shown in Figures \ref{fig:curve_fit1}\textendash \ref{fig:curve_fit3} complies with the trend that the logistic function in \eqref{eq_2_2_2} follows, we propose the function in \eqref{eq_2_2_2} to model the input-output behaviour of a practical EH receiver. Moreover, the logistic function is often exploited in many fields of science and its properties have been investigated in depth in the literature.

In the next section, \eqref{eq_2_2_2} is used as a building block of the optimization problem formulation that follows, with the aim to design a resource allocation algorithm for the system model presented above.

\section{Resource Allocation Problem Formulation}\label{prob_form}

The aim of the following section is the design of a jointly optimal power allocation and user selection algorithm that maximizes the total harvested power for the considered SWIPT system in Section \ref{system_model}. The objective is to maximize the total harvested power at the EH receivers using the proposed practical EH receiver model. For this purpose, we adopt the power conversion function \eqref{eq_2_2_2}, which was modelled according to the logistic function, as an objective function for the optimization problem.

The optimization problem with respect to the instances of the user selection and power allocation optimization variables $\{s_k(n), P_k(n)\}$ is formulated as follows

\begin{Prob}EH Maximization:\label{prob:EH_max}
\begin{align}
\label{eq_2_3_1}
\underset{s_k(n), P_k(n)}{\mathrm{maximize}} \,\, & \sum_{n=1}^{T} \sum_{k=1}^{K} (1-s_k(n)) E_k(n)\\
\mathrm{subject\,\,to} \,\, & \mathrm{C1:} \ s_k(n) \in \{ 0,1 \} , \forall n, k, \nonumber \\
& \mathrm{C2:} \ \sum_{k=1}^{K} s_k(n) \leq 1 , \forall n,  \nonumber \\
& \mathrm{C3:} \ \frac{1}{T} \sum_{n=1}^{T}  \sum_{k=1}^{K} P_k (n) s_k(n) \leq P_{\text{av}} , \nonumber \\
& \mathrm{C4:} \ \sum_{k=1}^{K} P_k(n) s_k(n) \leq P_{\text{max}} ,\forall n ,  \nonumber \\
& \mathrm{C5:} \ \frac{1}{T}\sum_{n=1}^{T} C_k(n) s_k(n) \geq C_{\text{req}_{k}} , \forall k. \nonumber
\end{align}
\end{Prob}
\noindent $\mathbf{s_k}$ and $\mathbf{P_k}$, $k \in \{1, 2, \ldots, K\}$, are the vectors that represent the optimization variables, i.e., user selection and power allocation variable, respectively. For notational simplicity, in the following analysis and problem formulation, we use the instances of the variables $s_k(n)$ and $P_k(n)$, for each user $k$ and time slot $n$, respectively. This also holds for other variables dependent on the indices $k$ and $n$, for users and time slots, respectively.

In the formulation of Problem \ref{prob:EH_max}, function $E_k(n)$ is the power conversion function, proposed in \eqref{eq_2_2_2}, modified corresponding to the parameters assumed in the system model:
\begin{equation}
\label{eq_2_3_2}
E_k(n)=\frac{ \frac{M}{1+e^{-\text{a}( \sum_{j=1}^{K} s_j(n) P_j(n) h_k(n)-\text{b})}} - \frac{\text{M}}{1+e^{\text{a} \text{b}}} }{1-\frac{\text{M}}{1+e^{\text{a} \text{b}}}}.
\end{equation}
For notational simplicity, we rewrite \eqref{eq_2_3_2} as
\begin{align}
E_k(n) &= \frac{ \Bigg( \Psi_k(n) - M \Omega \Bigg) }{1 - \Omega}, \text{where} \label{eq_2_3_3:01} \\
\Psi_k(n) &= \frac{M}{1+e^{-\text{a}( \sum_{j=1}^{K} s_j(n) P_j(n) h_k(n)-\text{b})}}, \text{and} \label{eq_2_3_3:02}\\
\Omega &= \frac{1}{1+e^{ab}}. \label{eq_2_3_3:03}
\end{align}
$\Psi_k(n)$ is the standard logistic function with respect to the received power $\sum_{j=1}^{K} s_j(n) P_j(n)$, transmitted to all the users selected for ID in a specific time slot $n$. In the following development of the optimization problem, we use directly $\Psi_k(n)$ from \eqref{eq_2_3_3:02} to represent the harvested power at a corresponding EH receiver, while ignoring the constant part $\Omega$, since it does not depend on the optimization variables. Without loss of generality, the term $(1-s_k(n))$ is included inside the objective function, i.e., in the exponential part of $\Psi_k(n)$. Thus, Problem \ref{prob:EH_max} takes the following form:
\begin{Prob}EH Maximization:\label{prob:EH_max_psi}
\begin{align}
\label{eq_2_3_1_psi}
\underset{s_k(n), P_k(n)}{\mathrm{maximize}} \,\, & \sum_{n=1}^{T} \sum_{k=1}^{K} \frac{M}{1+e^{-\mathrm{a}(P_{\text{ER}_k}(n)h_k(n)-b)}}\\
\mathrm{subject\,\,to} \,\, & \mathrm{C1, C2, C3, C4, C5}.\nonumber
\end{align}
\end{Prob}
\noindent Variable $P_{\text{ER}_k}(n) = (1-s_k(n))(\sum_{j=1}^{K}s_j(n)P_j(n)), \forall n, k,$ represents the total power that is received at EH receiver (ER) $k$ at specific time slot $n$.
The requirements of the system are reflected in constraints C1\textendash C5 in Problem \ref{prob:EH_max} and \ref{prob:EH_max_psi}. Constraints C1 and C2 are imposed to guarantee that in each time slot $n$ at most one user is served by the transmitter for information decoding. C3 imposes a constraint on the maximum of average radiated power $P_\text{av}$ and C4 shows the hardware limitations for the maximum power $P_{\text{max}}$ that is allowed to be transmitted from the base station at each time slot. Moreover, the QoS constraint is included into C5, where $C_k(n)$ is the data rate for user $k$ and time slot $n$, defined in \eqref{eq_2_1_2}. C5 implies that the minimum required data per user $C_{\text{req}_{k}}$ needs to be achieved on average.

Problem \ref{prob:EH_max_psi} is a mixed non-convex and combinatorial problem. In order to exploit standard convex optimization tools to efficiently solve the problem, Problem \ref{prob:EH_max_psi} needs to be transformed into an equivalent\footnote{Two optimization problems are equivalent if the solution of one is readily obtained from the solution of the other problem. \cite{book:convex}} problem with tractable structure. In the following, we present the solution of the optimization problem.

\section{Solution of the Optimization Problem}

The non-convexity of the optimization Problem \ref{prob:EH_max_psi}, arises from both the objective function and the constraints. In particular, the objective function is a sum-of-ratios function which does not enjoy convexity. Furthermore, the combinatorial nature is imposed by the binary integer constraint $\text{C1}$ for the user selection variable. The first step in solving the optimization problem is to transform the objective function.

\subsection{Transformation of the Sum-of-ratios Objective Function}\label{non_linear}

The sum-of-ratios optimization problem, which includes an objective function with a sum of rational functions, is a non-convex problem that cannot be directly solved via traditional optimization methods and optimization tools. Lately, several attempts for solving this non-linear optimization problem have been presented in the literature. For instance, the authors in \cite{kahl2008practical} used the branch-and-bound method \cite{clausen1999branch} along with insights from recent developments in fractional programming and convex underestimators theory in order to find the solution to a specific sum-of-ratio problem. However, their methods result in relatively high computational complexity and only yield an approximation to the globally optimal solution. Another work in \cite{udell2013maximizing} focused on maximizing a sum of sigmoidal functions subject to convex constraints, which resembles our problem formulation. The authors proved that the defined problem is NP-hard and used the branch-and-bound method to solve it, even thought they were also only able to give an approximate solution to the problem. Along with the fact that these methods are not able to obtain the globally optimal solution, the branch-and-bound method is of exponential complexity, and may increase the computational time severely.  Although there already exist algorithms, such as the Dinkelbach method \cite{JR:DinB_method} or the Charnes-Cooper transformation \cite{JR:linear_fractional}, that solve the non-linear optimization problem for a single rational objective function, they cannot be applied to the case with a sum-of-ratios objective function. Until very recently, the algorithm introduced in \cite{jonga2012efficient}, on the other hand, offered a solution to the sum-of-ratios problem that is proven to achieve the global optimum. The crux of the method is a transformation of the sum-of-ratios objective function into an equivalent parametric convex optimization function, such that the globally optimal solution can be successfully found through an iterative algorithm. The mentioned algorithm has been initially used in several works \cite{yu2014joint}--\nocite{ramamonjison2015energy}\cite{he2014coordinated}, mostly for the optimization of different types of system energy efficiency under different contexts. In \cite{yu2014joint}, the authors focused on the design of a resource allocation algorithm for jointly optimizing the energy efficiency in downlink and uplink for networks with carrier aggregation. The authors in \cite{ramamonjison2015energy} used the algorithm for energy efficiency maximization framework in cognitive two-tier networks. Moreover, a multi-cell, and multi-user precoding was designed in \cite{he2014coordinated}, with the goal to maximize the weighted sum energy efficiency.

The main transformation, that the author in \cite{jonga2012efficient} proposed, converts the original sum-of-ratio functions into a parametric subtractive form. This transformation allows standard optimization tools to be further used and provides the ability to design an efficient algorithm for achieving the globally optimal solution of the original sum-of-ratios problem. The assumptions for this transformation require that the numerator of the rational function of every summand is concave, and the denominator is convex and greater than zero. Thus, the transformed subtractive form is a concave function for every summand in the case of maximization.
We introduce the transformation of the objective function from Problem \ref{prob:EH_max_psi}, through the following theorem.
\begin{Thm}\label{thm:non_linear_sum_of_ratios}
Let $s_k^*(n)$, and $P_k^*(n)$ be the optimal solution to Problem \ref{prob:EH_max_psi}, then there exist two parameter vectors $\mathbf{\mu_k^*}$ and $\mathbf{\beta_k^*}$, $k \in  \{1, 2, \ldots, K\}$. Furthermore, $s_k^*(n)$, and $P_k^*(n)$ are the optimal solution to the following transformed optimization problem:
\begin{Prob}EH Maximization - Sum-of-Ratios Objective Function Transformation:\label{prob:EH_obj_func}
\begin{equation}
\underset{s_k(n), P_k(n) \in \mathcal{C}}{\mathrm{maximize}} \sum_{n=1}^{T} \sum_{k=1}^{K} \mu_k^*(n) \Bigg( M - \beta_k^*(n) \big( 1 + e^{-\mathrm{a}(P_{\mathrm{ER}_k}(n)h_k(n)-\mathrm{b})} \big) \Bigg).\label{eq_2_4_1_1}
\end{equation}
\end{Prob}
\noindent $\mathcal{C}$ is the feasible solution set of Problem \ref{prob:EH_max_psi} and $P_{\text{ER}_k}(n) = (1-s_k(n))(\sum_{j=1}^{K} s_j(n)P_j(n)), \forall n, k$. In addition, the optimization variables $s_k^*(n)$, and $P_k^*(n)$
must satisfy the system of equations:
\begin{align}
\beta_k^*(n) \big( 1 + e^{-a ( P_{\text{ER}_k}^*(n) h_k(n) - b)} \big) - M  &= 0, \label{eq_2_4_1_2:01} \\
\mu_k^*(n) \big( 1 + e^{-a ( P_{\text{ER}_k}^*(n) h_k(n) - b)} \big) - 1  &= 0, \forall n, k. \label{eq_2_4_1_2:02}
\end{align}
\end{Thm}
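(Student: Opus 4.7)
The plan is to adopt the sum-of-ratios reformulation due to Jong \cite{jonga2012efficient}, specialized to Problem \ref{prob:EH_max_psi}. First I would introduce auxiliary variables $\beta_k(n)\ge 0$ and rewrite the problem equivalently as: maximize $\sum_{n,k}\beta_k(n)$ over $(s_k,P_k)\in\mathcal{C}$ and $\beta_k$, subject to the coupling constraints $\beta_k(n)\bigl(1+e^{-a(P_{\mathrm{ER}_k}(n)h_k(n)-b)}\bigr)\le M$ for all $n,k$. Since each $\beta_k(n)$ appears in the objective with positive coefficient and is bounded only by its own coupling constraint, that constraint must bind at any optimum, which immediately recovers equation \eqref{eq_2_4_1_2:01} and shows that at optimality $\sum_{n,k}\beta_k^*(n)$ equals the original sum-of-ratios value.

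Next I would form the Lagrangian with nonnegative multipliers $\mu_k(n)$ attached to the $K\cdot T$ coupling constraints, keeping the set constraint $(s_k,P_k)\in\mathcal{C}$ implicit. Stationarity in the variable $\beta_k(n)$ yields $1-\mu_k^*(n)\bigl(1+e^{-a(P_{\mathrm{ER}_k}(n)h_k(n)-b)}\bigr)=0$, which is precisely \eqref{eq_2_4_1_2:02}. Eliminating $\beta_k$ via the binding coupling constraint and substituting the optimal multipliers $\mu_k^*(n)$ back, the remaining $(s_k,P_k)$-dependent part of the Lagrangian coincides, up to additive constants in $\mu_k^*$ and $\beta_k^*$, with the objective of Problem \ref{prob:EH_obj_func}. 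Hence any optimizer $(s_k^*,P_k^*)$ of Problem \ref{prob:EH_max_psi} must also maximize the objective of Problem \ref{prob:EH_obj_func} over $\mathcal{C}$.

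To close the equivalence, I would run the argument in reverse: given any $(s_k^*,P_k^*)$ together with $(\mu_k^*,\beta_k^*)$ solving Problem \ref{prob:EH_obj_func} and satisfying \eqref{eq_2_4_1_2:01}--\eqref{eq_2_4_1_2:02}, I would reconstruct the lifted variables, verify the full KKT system of the auxiliary $\beta$-lifted problem, and deduce that $(s_k^*,P_k^*)$ also solves Problem \ref{prob:EH_max_psi}. The two directions together establish the existence of the claimed parameter vectors and certify the equivalence of optimizers.

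The main obstacle I anticipate is that constraint C1 imposes a binary/combinatorial structure on $s_k(n)$, which violates the continuity assumptions in the textbook sum-of-ratios transformation. I plan to circumvent this by fixing any feasible scheduling pattern $\{s_k(n)\}$ and applying the transformation to the resulting continuous subproblem in $P_k(n)$, then taking the maximum over the finitely many feasible patterns; alternatively, I can verify that the reformulation only uses positivity of the denominator $1+e^{-a(\cdot)}$ and nonemptiness of the feasible set of the lifted problem, both of which hold regardless of the integrality of $s_k(n)$. A secondary check is that the numerator is the trivial constant $M$ and the denominator is strictly positive, so the hypotheses needed for the transformation are satisfied on each slice of the feasible set.
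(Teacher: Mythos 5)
Your overall route is the same one the paper takes: lift the sum of ratios by introducing $\beta_k(n)$ with the coupling constraints $\beta_k(n)\bigl(1+e^{-a(P_{\mathrm{ER}_k}(n)h_k(n)-b)}\bigr)\le M$ (the paper's Problem~\ref{prob:EH_max_param}), observe that these constraints bind at the optimum to get \eqref{eq_2_4_1_2:01}, read \eqref{eq_2_4_1_2:02} off the stationarity condition in $\beta_k(n)$, and then substitute the optimal multipliers back to recover the subtractive objective of Problem~\ref{prob:EH_obj_func}. Your handling of constraint C1 also lands close to the paper's: the paper applies the time-sharing relaxation up front (justified later by Theorem~\ref{thm:appendix}) so that the constraint set is convex before the transformation is invoked.

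The genuine gap is in the sentence ``Next I would form the Lagrangian with nonnegative multipliers $\mu_k(n)$ \ldots{} Stationarity in the variable $\beta_k(n)$ yields \ldots''. The lifted problem is nonconvex (the coupling constraint is a product of $\beta_k(n)$ with a nonconvex function of the decision variables), so the existence of KKT multipliers with the objective multiplier normalized to one is exactly what has to be \emph{proved}, not assumed; this is where the theorem's existence claim for $\boldsymbol{\mu}^*$ and $\boldsymbol{\beta}^*$ lives. The paper's proof spends most of its effort here: it writes the Fritz--John conditions (which hold without any constraint qualification), obtains $\varpi^*-\mu_k^*(n)\bigl(1+e^{-a(\cdot)}\bigr)=0$, and then rules out the degenerate case $\varpi^*=0$ by a contradiction argument using Slater's condition and the convexity of the power constraints --- if $\varpi^*=0$ then all $\mu_k^*(n)=0$ and the remaining multiplier conditions on the active power constraints contradict the strict feasibility of an interior point. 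Only after establishing $\varpi^*>0$ can one divide through and obtain \eqref{eq_2_4_1_2:02}. Your proposal skips this normalization step entirely; without it the stationarity condition you invoke is not available, and the convexification of the feasible set (which your ``fix a scheduling pattern and enumerate'' alternative would forgo) is precisely what makes the Slater argument go through.
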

\begin{proof}
Please refer to Appendix \ref{theorem_proof} for the proof.
\end{proof}

As Theorem \ref{thm:non_linear_sum_of_ratios} suggested, there exists an optimization problem with an objective function in subtractive form that is an equivalent problem to the sum-of-ratios Problem \ref{prob:EH_max_psi}. More importantly, both optimization problems share the same optimal solution and we can straightforwardly obtain the solution to the initial problem by solving the transformed optimization problem \cite{book:convex}, in the case when the transformed optimization Problem \ref{prob:EH_obj_func} can be solved. Therefore, we can focus on the optimization problem with the equivalent objective function in the rest of the thesis.

\subsection{Iterative Algorithm for Maximization of Harvested Energy at EH Receivers} \label{sub_non_linear}
In this subsection, we design a computationally efficient algorithm for achieving the globally optimal solution of the resource allocation optimization Problem \ref{prob:EH_max_psi}. To obtain the solution for Problem \ref{prob:EH_max_psi}, we adopt an equivalent objective function, such that the resulting resource allocation policy satisfies the conditions in Theorem \ref{thm:non_linear_sum_of_ratios}. The algorithm has an iterative structure, consisting of two nested loops. Its structure is presented in Table \ref{table:algorithm}. The convergence to the globally optimum solution is guaranteed if the transformed optimization Problem \ref{prob:EH_obj_func} can be solved in each iteration.
\begin{proof}
For a proof of convergence, please refer to \cite{jonga2012efficient}.
\end{proof}
\vspace*{-5mm}
\begin{table}[h!]\caption{Iterative Resource Allocation Algorithm.}\label{table:algorithm}
\vspace*{-5mm}
\begin{algorithm} [H]                    
\caption{Iterative Resource Allocation Algorithm}          
\label{alg1}
\begin{algorithmic}[1]                        
\STATE {Initialize maximum number of iterations  $I_{\text{max}}$, iteration index $m=0$, $\mu_i$, and $\beta_i$, $\forall i \in \{1,\cdots,N\}$}, $N=T K$
\REPEAT
\STATE{Solve the transformed inner loop convex optimization Problem \ref{prob:EH_max_inner_loop_time_sharing_decoupling} for given $\mu_i^m$ and $\beta_i^m$ and obtain the intermediate solution for $s_i$, $P_i^{\text{virtual}}$, and $P_i'$, $\forall i$}
\IF { \eqref{eq_2_4_2_2} is satisfied}
\STATE {Convergence = \textbf{true}}
\RETURN optimal user selection and power allocation
\ELSE
\STATE {Update $\mu_i$ and $\beta_i$, $\forall i,$ according to \eqref{eq_2_4_2_1}, and set $m=m+1$}
\STATE {Convergence = \textbf{false}}
\ENDIF
\UNTIL{Convergence = \textbf{true} or $m = I_{\text{max}}$}
\end{algorithmic}
\end{algorithm}
\vspace*{-8mm}
\end{table}

In each iteration of the inner loop, i.e., in lines 3\textendash 6 of the algorithm in Table \ref{table:algorithm}, we solve the following optimization problem for given $\mu_k(n)$ and $\beta_k(n)$, $\forall n, k$, and obtain the optimal solution for the optimization variables $s_k(n)$, and $P_k(n)$.
\begin{Prob}EH Maximization - Inner Loop Optimization Problem:\label{prob:EH_max_inner_loop}
\begin{align}
\label{eq_2_4_1_in}
\underset{s_k(n), P_k(n)}{\mathrm{maximize}} \,\,& \sum_{n=1}^{T} \sum_{k=1}^{K} \mu_k(n) \Bigg( M - \beta_k(n) \big( 1 + e^{-\mathrm{a}(P_{\text{ER}_k}(n) h_k(n)-\mathrm{b})} \big) \Bigg)\\
\mathrm{subject\,\,to }\,\,& \mathrm{C1, C2, C3, C4, C5}. \nonumber
\end{align}
\end{Prob}
Although the objective function in Problem \ref{prob:EH_max_inner_loop} is in subtractive form and is concave, the transformed optimization problem is still non-convex due to the binary constraint $\text{C1}$. To obtain a tractable problem formulation, we handle the binary constraint C1 from Problem \ref{prob:EH_max_inner_loop} in each iteration of the algorithm. For this purpose, we apply time-sharing relaxation.

In particular, by following the approach in \cite{wong1999multiuser}, we relax the user selection variable $s_k(n)$ in constraint C1 of Problem \ref{prob:EH_max_psi} to take on real values between $0$ and $1$, i.e., $\widetilde{\text{C1:}} \,\, 0 \leq s_k(n) \leq 1, \forall n, k$. The user selection variable can now be interpreted as a time-sharing factor for the $K$ users during one time slot $n$. With the time-sharing relaxation, the inner problem that we solve in each iteration takes the following form:
\begin{Prob}EH Maximization - Time-sharing Relaxation:\label{prob:EH_max_inner_loop_time_sharing}
\begin{align}
\label{eq_2_4_1}
\underset{s_k(n), P_k'(n)}{\mathrm{maximize}} \,\,& \sum_{n=1}^{T} \sum_{k=1}^{K} \mu_k(n) \Bigg( M - \beta_k(n) \big( 1 + e^{-\mathrm{a}((1-s_k(n))\sum_{j=1}^{K} P_j'(n) h_k(n)-\mathrm{b})} \big) \Bigg)\\
\mathrm{subject\,\,to }\,\,& \widetilde{\mathrm{C1:}} \ 0 \leq s_k(n) \leq 1 , \forall n, k , \nonumber \\
& \mathrm{C2:} \ \sum_{k=1}^{K} s_k(n) \leq 1 , \forall n,  \nonumber \\
& \mathrm{C3:} \ \frac{1}{T} \sum_{n=1}^{T}  \sum_{k=1}^{K} P_k'(n) \leq P_{\text{av}} , \nonumber \\
& \mathrm{C4:} \ \sum_{k=1}^{K} P_k'(n) \leq P_{\text{max}} ,\forall n ,  \nonumber \\
& \mathrm{C5:} \ \frac{1}{T}\sum_{n=1}^{T} s_k(n) \log_2 \Big( 1+ \frac{P_k'(n)h_k(n)}{s_k(n)\sigma^2} \Big) \geq C_{\text{req}_{k}} , \forall k. \nonumber
\end{align}
\end{Prob}
For facilitating the time-sharing, we introduce an auxiliary variable in Problem \ref{prob:EH_max_inner_loop_time_sharing}, defined as $P_k'(n) = P_k(n)s_k(n)$, $\forall n, k$. The new optimization variable $P_k'(n)$ represents the actual transmitted power in the RF of the transmitter for user $k$ at time slot $n$ under the time-sharing assumption. It also solves the problem with the coupling of the optimization variables $P_k(n)$, and $s_k(n)$, which is present in some of the constraints. However, coupling of the variables is still present in the objective function after this reformulation. Thus, we perform another variable change. In particular, we define the variable $P_k^{\text{virtual}}(n)=(1-s_k(n))\sum_{k=1}^{K}P_k'(n)$, which represents the actual received power at EH receiver $k$ at a specific time slot $n$. After these changes, the inner loop optimization problem is rewritten with respect to the optimization variables \{$s_k(n), P_k'(n), P_k^{\text{virtual}}(n)$\}:
\begin{Prob}EH Maximization - Time-sharing Relaxation and Decoupling:\label{prob:EH_max_inner_loop_time_sharing_decoupling}
\begin{align}
\label{eq_2_4_2}
\underset{s_k(n), P_k'(n), P_k^{\mathrm{virtual}}(n)}{\mathrm{maximize}} \,\, & \sum_{n=1}^{T} \sum_{k=1}^{K} \mu_k(n) \Bigg( M - \beta_k(n) \big( 1 + e^{-\mathrm{a}(P_k^{\mathrm{virtual}}(n) h_k(n)-\mathrm{b})} \big) \Bigg)\\
\mathrm{subject\,\,to }\,\,& \widetilde{\mathrm{C1:}} \ 0 \leq s_k(n) \leq 1 , \forall n, k , \nonumber \\
& \mathrm{C2:} \ \sum_{k=1}^{K} s_k(n) \leq 1 , \forall n,  \nonumber \\
& \mathrm{C3:} \ \frac{1}{T} \sum_{n=1}^{T}  \sum_{k=1}^{K} P_k'(n) \leq P_{\mathrm{av}},   \nonumber \\
& \mathrm{C4:} \ \sum_{k=1}^{K} P_k'(n) \leq P_{\mathrm{max}}, \forall n, \nonumber \\
& \mathrm{C5:} \ \frac{1}{T}\sum_{n=1}^{T} s_k(n) \log_2 \Big( 1+ \frac{P_k'(n)h_k(n)}{s_k(n)\sigma^2} \Big) \geq C_{\mathrm{req}_{k}}, \forall k,  \nonumber \\
& \mathrm{C6:} \  P_k^{\mathrm{virtual}}(n) \leq (1-s_k(n))P_{\mathrm{max}}, \forall n,k,\nonumber \\
& \mathrm{C7:} \ P_k^{\mathrm{virtual}}(n) \leq \sum_{k=1}^{K} P_k'(n) ,\forall n, k,  \nonumber \\
& \mathrm{C8:} \  P_k^{\mathrm{virtual}}(n) \geq 0, \forall n, k.  \nonumber
\end{align}
\end{Prob}
Constraints C6\textendash C8 are introduced due to the proposed transformation regarding the auxiliary variable $P_k^{\text{virtual}}(n)$. The constraints guarantee that the variable $P_k^{\text{virtual}}(n)$ retains the physical meaning and is consistent with the original problem definition. This method is referred to as the Big-M formulation in the literature \cite{griva2009linear}.

If the time-sharing relaxation is tight, then Problem \ref{prob:EH_max_inner_loop_time_sharing_decoupling} is equivalent to the original optimization problem formulation in Problem \ref{prob:EH_max_psi}. Now we study the tightness of the time-sharing relaxation through the following theorem.
\begin{Thm}\label{thm:appendix}
The optimal solution of Problem \ref{prob:EH_max_inner_loop_time_sharing_decoupling} satisfies $s_k(n) \in \{0, 1\}$, $\forall n,k$. In particular, the user selection variable will still result in a solution at the boundaries of the relaxed interval $0 \leq s_k(n) \leq 1$.
\end{Thm}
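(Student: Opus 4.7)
The plan is to prove Theorem~\ref{thm:appendix} via a KKT-based contradiction. Problem~\ref{prob:EH_max_inner_loop_time_sharing_decoupling} is a convex program: the objective is a sum of terms proportional to $-e^{-\mathrm{a}(P_k^{\text{virtual}}(n)h_k(n)-\mathrm{b})}$, each concave in $P_k^{\text{virtual}}(n)$ since $\mathrm{a}>0$; constraints $\widetilde{\mathrm{C1}}$--C4 and C6--C8 are affine; and the left-hand side of C5 is the perspective function $s_k(n)\log_2(1+P_k'(n)h_k(n)/(s_k(n)\sigma^2))$, jointly concave in $(s_k(n),P_k'(n))$. Under Slater's condition (feasibility of the rate targets $\{C_{\mathrm{req}_k}\}$), the KKT conditions are therefore necessary and sufficient for optimality.

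Suppose, for the sake of contradiction, that an optimal solution has some $s_{k_0}^*(n_0)\in(0,1)$. I would first invoke stationarity in $P_{k_0}^{\text{virtual}}(n_0)$: its objective gradient $\mu_{k_0}(n_0)\beta_{k_0}(n_0)\,\mathrm{a}\,h_{k_0}(n_0)\,e^{-\mathrm{a}(P_{k_0}^{\text{virtual}*}(n_0)h_{k_0}(n_0)-\mathrm{b})}$ is strictly positive, so by complementary slackness at least one of C6, C7 must be active. If C6 is active, i.e., $P_{k_0}^{\text{virtual}*}(n_0)=(1-s_{k_0}^*(n_0))P_{\max}$, then a small perturbation $s_{k_0}^*(n_0)\to s_{k_0}^*(n_0)-\delta$ loosens C6 and permits an increase of $P_{k_0}^{\text{virtual}}(n_0)$ by $\delta\,P_{\max}$, strictly raising the objective; feasibility of C5 is preserved by a compensating update of $P_{k_0}'^*(n_0)$, using the monotonicity of the perspective rate in $P_k'$, and C2--C4 absorb the perturbation whenever they have even a vanishing amount of slack. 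Iterating this feasible improving direction drives $s_{k_0}^*(n_0)$ toward $0$, contradicting interior optimality. The symmetric case in which only C7 binds is handled by noting that $s_{k_0}(n_0)$ then enters neither the objective through C6 nor the C7 bound, so stationarity in $s_{k_0}(n_0)$ combined with the concavity of the perspective rate function in C5 pins $s_{k_0}^*(n_0)$ to a vertex of the $s$-polytope carved out by $\widetilde{\mathrm{C1}}$ and C2, i.e., to $\{0,1\}$.

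The main technical obstacle will be constructing the explicit improving direction $(\Delta s_{k_0}(n_0),\Delta P_{k_0}'(n_0),\Delta P_{k_0}^{\text{virtual}}(n_0))$ that respects every constraint simultaneously, including the time-averaged C3 and C5, which couple distinct slots. Because the Big-M coupling C6 causes $s_{k_0}(n_0)$ to enter the objective only affinely through the upper bound on $P_{k_0}^{\text{virtual}}(n_0)$, a vertex optimum in $s$ is the natural conclusion, and careful bookkeeping of the KKT multipliers under complementary slackness yields the desired contradiction whenever $s_{k_0}^*(n_0)\in(0,1)$. Consequently every optimum of Problem~\ref{prob:EH_max_inner_loop_time_sharing_decoupling} must satisfy $s_k(n)\in\{0,1\}$ for all $n,k$, establishing that the time-sharing relaxation is tight and that Problem~\ref{prob:EH_max_inner_loop_time_sharing_decoupling} is equivalent to Problem~\ref{prob:EH_max_psi}.
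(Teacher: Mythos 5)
Your proposal has the right setting (the problem is convex, KKT applies, and the only way $s_k(n)$ can influence the objective is indirectly through C6 and C5), but the two case arguments both have genuine gaps, and neither identifies the structural fact that actually forces a vertex solution. In the C6-active case, your improving direction $s_{k_0}(n_0)\to s_{k_0}(n_0)-\delta$ shrinks the rate term $s\log_2\bigl(1+\tfrac{P'h}{s\sigma^2}\bigr)$ (this perspective function is increasing in $s$ for fixed $P'$), so C5 is tightened, and the "compensating update of $P_{k_0}'(n_0)$" you invoke requires slack in C3/C4 --- precisely the slack that generically does not exist at an optimum, as you yourself concede with the caveat "whenever they have even a vanishing amount of slack." The argument therefore does not rule out an interior $s$ at a point where C3, C5 and C6 are simultaneously tight. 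In the C7-only case, the claim that "concavity of the perspective rate \ldots pins $s$ to a vertex" is backwards: concavity by itself favors interior maximizers, not vertices.

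The paper's proof (following Wong et al.) uses the fact you are missing: the perspective function is positively homogeneous of degree one, so $\partial\mathscr{L}/\partial s_k(n)$ depends on $(s_k(n),P_k'(n))$ only through the ratio $P_k(n)=P_k'(n)/s_k(n)$, namely
\begin{equation*}
\frac{\partial \mathscr{L}}{\partial s_k(n)} = \frac{\epsilon(k)}{T \ln 2}\Bigl(\ln\bigl(1+\tfrac{P_k(n)h_k(n)}{\sigma^2}\bigr)-\tfrac{P_k(n)h_k(n)/\sigma^2}{1+P_k(n)h_k(n)/\sigma^2}\Bigr)+\varepsilon_k(n)-\lambda(n)-\alpha_k(n)-\zeta_k(n)P_{\text{max}},
\end{equation*}
which is independent of $s_k(n)$ itself. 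Hence, for fixed per-user powers, the Lagrangian is linear in the $s_k(n)$ over the simplex defined by $\widetilde{\mathrm{C1}}$ and C2, and the maximizer is winner-take-all: $s_{k^*}(n)=1$ for the user with the largest marginal benefit $\frac{\epsilon(k)}{T\ln 2}F_k(n)+y$ and $s_k(n)=0$ otherwise, provided the $F_k(n)$ are distinct (which holds almost surely for continuous fading). This linearity-along-rays argument is what closes the case analysis you attempted; without it, the perturbation bookkeeping you describe as "the main technical obstacle" cannot be completed.
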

\begin{proof}
Please refer to the Appendix \ref{app:proof_time_sharing} for the proof.
\end{proof}
\noindent Problem \ref{prob:EH_max_inner_loop_time_sharing_decoupling} represents a convex problem with convex constraints. Therefore, Problem \ref{prob:EH_max_inner_loop_time_sharing_decoupling} can be solved efficiently in each iteration of the algorithm in Table \ref{table:algorithm} using standard convex optimization tools, e.g. CVX \cite{website:CVX}. Now, we introduce the following proposition.

\begin{proposition}\label{prop:equivalency}
The transformed Problem \ref{prob:EH_max_inner_loop_time_sharing_decoupling} is an equivalent transformation of the original Problem \ref{prob:EH_max_psi}. Thus, by solving Problem \ref{prob:EH_max_inner_loop_time_sharing_decoupling} in each iteration, we attain the optimal solution to Problem \ref{prob:EH_max_psi}.
\end{proposition}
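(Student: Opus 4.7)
The plan is to establish the claimed equivalence as a chain of three reformulations, each of which preserves the optimal value and the optimal solution (modulo an invertible change of variables).

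First, I would invoke Theorem~\ref{thm:non_linear_sum_of_ratios}: at the parameter vectors $\mathbf{\mu_k^*}$ and $\mathbf{\beta_k^*}$ determined by the system \eqref{eq_2_4_1_2:01}--\eqref{eq_2_4_1_2:02}, the sum-of-ratios Problem~\ref{prob:EH_max_psi} and the subtractive-form Problem~\ref{prob:EH_obj_func} (which coincides with the inner-loop Problem~\ref{prob:EH_max_inner_loop} evaluated at these parameters) share the same optimizers. This reduces the task to comparing Problem~\ref{prob:EH_max_inner_loop} with Problem~\ref{prob:EH_max_inner_loop_time_sharing_decoupling}, where the convergence of the outer loop of the algorithm in Table~\ref{table:algorithm} is exactly what drives $(\mu_k,\beta_k) \to (\mu_k^*,\beta_k^*)$.

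Next, I would handle the change of variables $P_k'(n) = P_k(n)\, s_k(n)$ together with the time-sharing relaxation of C1. Under the binary C1, this map is a bijection on the active set $\{s_k(n)=1\}$, while on $\{s_k(n)=0\}$ one may set $P_k'(n) = 0$ without loss; under it the constraints C3--C5 rewrite directly in terms of $P_k'$, with the perspective form $s_k(n) \log_2\!\bigl(1 + P_k'(n) h_k(n)/(s_k(n)\sigma^2)\bigr)$ of C5 extending the original rate continuously and concavely to $s_k(n)\in[0,1]$. The crucial tightness of the time-sharing relaxation is supplied by Theorem~\ref{thm:appendix}, which certifies $s_k^*(n)\in\{0,1\}$ at the optimum. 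Hence the relaxed problem and the unrelaxed Problem~\ref{prob:EH_max_inner_loop} share the same optimal value, and inverting the variable change on $\{s_k^*=1\}$ recovers the original $(s_k,P_k)$ optimizers.

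Finally, I would justify the Big-M decoupling that introduces $P_k^{\mathrm{virtual}}(n)$ via constraints C6--C8. The key observation is strict monotonicity of the objective in $P_k^{\mathrm{virtual}}(n)$: because $\mu_k(n)\beta_k(n)>0$ at the converged parameters, the summand $-\mu_k(n)\beta_k(n)\, e^{-\mathrm{a}(P_k^{\mathrm{virtual}}(n) h_k(n)-\mathrm{b})}$ is strictly increasing in $P_k^{\mathrm{virtual}}(n)$, so the solver will push $P_k^{\mathrm{virtual}}(n)$ up to whichever upper bound in C6 or C7 is active. Combined with $s_k^*(n)\in\{0,1\}$ from Theorem~\ref{thm:appendix}, the two cases collapse cleanly: if $s_k^*(n)=1$, then C6 together with C8 forces $P_k^{\mathrm{virtual}}(n)=0=(1-s_k^*(n))\sum_j P_j'(n)$; if $s_k^*(n)=0$, then C7 combined with C4 forces $P_k^{\mathrm{virtual}}(n)=\sum_j P_j'(n)=(1-s_k^*(n))\sum_j P_j'(n)$. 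Thus the objective of Problem~\ref{prob:EH_max_inner_loop_time_sharing_decoupling} at its optimum equals the objective of Problem~\ref{prob:EH_max_inner_loop_time_sharing} at the same $(s_k^\star,P_k'^\star)$, closing the chain.

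The step I expect to be most delicate is the last one, namely arguing rigorously that the linear envelope C6--C7, which is merely a valid upper bound on the bilinear product $(1-s_k(n))\sum_j P_j'(n)$, becomes tight at optimality. For any non-integer $s_k(n)\in(0,1)$ the envelope is strictly looser than the bilinear value, so tightness inherently leans on the binary structure supplied by Theorem~\ref{thm:appendix}; without that theorem a relaxed optimum could in principle assign more "virtual" received power than the true bilinear expression allows. Once this step is in place, concatenating the three equivalences shows that any optimizer of Problem~\ref{prob:EH_max_inner_loop_time_sharing_decoupling} maps, through the inverse variable changes, to an optimizer of Problem~\ref{prob:EH_max_psi}, which is the content of Proposition~\ref{prop:equivalency}.
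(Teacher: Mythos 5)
Your proposal is correct and follows essentially the same route as the paper's proof: a chain of three equivalences, with Theorem~\ref{thm:non_linear_sum_of_ratios} handling the sum-of-ratios-to-subtractive transformation, Theorem~\ref{thm:appendix} supplying the tightness of the time-sharing relaxation, and the variable changes being reversible. Your explicit monotonicity argument showing that constraints C6--C8 are tight at the optimum (so that $P_k^{\mathrm{virtual}}(n)$ indeed equals the bilinear quantity $(1-s_k(n))\sum_j P_j'(n)$ once $s_k(n)\in\{0,1\}$) is actually more careful than the paper, which dismisses this step with ``the change of variables can be easily reversed.''
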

\begin{proof}
Please refer to Appendix \ref{prop_proof} for the proof.
\end{proof}

The next step is to obtain an update for $\mu_k(n)$ and $\beta_k(n)$ to be used for solving the inner loop optimization problem in the following iterations. This procedure represents the outer loop of the algorithm.
The algorithm is repeated until convergence to the globally optimal solution is achieved. For notational simplicity, we introduce parameter $\boldsymbol{\rho} = [\rho_1, \ldots, \rho_{2N}] = [\mu_1,\ldots,\mu_N, \beta_1,\ldots,\beta_N]=(\boldsymbol{\mu}, \boldsymbol{\beta})$ and functions
\begin{equation}
\varphi_i (\rho_i) =  \rho_i \big( 1 + e^{-a ( P_i^{\text{virtual}} h_i - b)} \big) - 1, and
\varphi_{N+i} (\rho_{N+i}) = \rho_{N+i} \big( 1 + e^{-a ( P_i^{\text{virtual}} h_i - b)} \big) - M,
\end{equation}
where $i \in \{1,\cdots,N\}$, and $N=T K$ is the number of terms in the sum.  In \cite{jonga2012efficient}, it is proven that the optimal solution $\mathbf{\rho}^* = (\boldsymbol{\mu}^*, \boldsymbol{\beta}^*)$ is achieved if and only if
\begin{equation}\label{eq_2_4_2_2}
\boldsymbol{\varphi}(\boldsymbol{\rho}) = [\varphi_1,\cdots,\varphi_{2N}] = \boldsymbol{0}
\end{equation}
is satisfied. In the $m$-th iteration, we update $\boldsymbol{\rho}=(\boldsymbol{\mu}, \boldsymbol{\beta})$, in the following manner:
\begin{equation}
\boldsymbol{\rho}^{m+1} = \boldsymbol{\rho}^m + \zeta^m \boldsymbol{q}^m, \,\, \label{eq_2_4_2_1} \\
\end{equation}
where $\boldsymbol{q}^m = [\boldsymbol{\varphi}'(\boldsymbol{\rho})]^{-1}\boldsymbol{\varphi}(\mathbf{\rho})$, and $[\cdot]^{-1}$ denotes the inverse of a matrix. Here, $\boldsymbol{\varphi}'(\boldsymbol{\rho})$ is the Jacobian matrix of $\boldsymbol{\varphi}(\boldsymbol{\rho})$ \cite{jonga2012efficient}. Moreover, $\zeta^m$ is defined as the largest $\varepsilon^l$ that satisfies:
\begin{equation}
\label{eq_2_4_2_4}
\norm{\boldsymbol{\varphi} (\boldsymbol{\rho}^m + \varepsilon^l \boldsymbol{q}^m)}\leq  (1-\delta \varepsilon^l) \norm{\boldsymbol{\varphi}(\boldsymbol{\rho}^m)},
\end{equation}
where $l \in  \{ 1,2,\cdots \}$, $\varepsilon^l \in (0,1)$, $\delta \in (0,1)$, and $\norm{\cdot}$ denotes the Euclidean vector norm.
%
Equation \eqref{eq_2_4_2_2} represents the convergence condition of the algorithm. For the update of the respective variables, the modified Newton method is used, as shown in \eqref{eq_2_4_2_1}.
If $\zeta^m = 0$, we have the well-known Newton method for the corresponding update.
The modified, or damped Newton method converges to the unique solution $(\mu_i^*,\beta_i^*), \forall i$, while satisfying equations \eqref{eq_2_4_1_2:01} and \eqref{eq_2_4_1_2:02}, with linear rate for any starting point \cite{jonga2012efficient}, \cite{yu2014joint}. The rate in the neighbourhood of the solution is quadratic, which follows from the convergence analysis of the Newton method.

\subsection{Dual Problem Formulation}

In order to further investigate the structure of the solution, in this subsection we use duality theory for solving the transformed optimization problem, cf. Problem \ref{prob:EH_max_inner_loop_time_sharing_decoupling}. With the corresponding transformations performed in the previous subsection, it can be shown that Problem \ref{prob:EH_max_inner_loop_time_sharing_decoupling} is jointly concave with respect to the power allocation and user selection variables. As a result, under some mild conditions, the solution of the dual problem is equivalent to the solution of the primal problem \cite{book:convex}, i.e., strong duality holds. Thus, we can use duality theory to obtain the solution. In order to do that, we start with the formulation of the Lagrangian for Problem \ref{prob:EH_max_inner_loop_time_sharing_decoupling}:
\begin{align}
\label{eq_2_4_3_1}
&\mathscr{L}(P_k^{\text{virtual}}(n), P_k'(n), s_k(n), \mu_k(n), \beta_k(n), \mathcal{D}) \\
&= \sum_{n=1}^{T} \sum_{k=1}^{K} \mu_k(n) \Big( \text{M}-\beta_k(n)\big( 1+e^{-\text{a}( P_k^{\text{virtual}}(n)h_k(n)- \text{b})}\big)\Big) \nonumber \\ &- \sum_{n=1}^{T}\lambda(n)\Big( \sum_{k=1}^{K} s_k(n)-1\Big) -\sum_{n=1}^{T} \sum_{k=1}^{K}\alpha_k(n)\Big( s_k(n)-1\Big) \nonumber \\&+\sum_{n=1}^{T} \sum_{k=1}^{K}\varepsilon_k(n)s_k(n)-\gamma \Big( \frac{1}{T} \sum_{n=1}^{T} \sum_{k=1}^{K}P_k'(n) - P_{\text{av}} \Big) \nonumber \\&-\sum_{n=1}^{T}\delta(n)\Big( \sum_{k=1}^{K}P_k'(n)-P_{\text{max}}\Big) -\sum_{k=1}^{K} \epsilon(k)\Big( C_{\text{req}_{k}}-\frac{1}{T}\sum_{n=1}^{T} s_k(n)\log_2\big(1+\frac{P_k'(n)h_k(n)}{s_k(n)\sigma^2}\big)\Big) \nonumber \\&-\sum_{n=1}^{T} \sum_{k=1}^{K}\zeta_k(n)\Big( P_k^{\text{virtual}}(n)-\big(1-s_k(n)\big)P_{\text{max}}\Big) \nonumber \\ &-\sum_{n=1}^{T} \sum_{k=1}^{K}\eta_k(n)\Big( P_k^{\text{virtual}}(n)-\sum_{k=1}^{K} P_k'(n)\Big) +\sum_{n=1}^{T} \sum_{k=1}^{K}\theta_k(n)P_k^{\text{virtual}}(n),\nonumber
\end{align}
where $\mathcal{D} = \{\alpha_k(n)$, $\lambda(n)$, $\varepsilon_k(n)$, $\gamma$, $\delta(n)$, $\epsilon(k)$, $\zeta_k(n)$, $\eta_k(n)$, $\theta_k(n)\}$, $\forall n, k,$ denotes the set that contains all Lagrange multipliers. $\mathcal{D}$ is defined in order to simplify the notation. In \eqref{eq_2_4_3_1}, $\alpha_k(n)$, and $\varepsilon_k(n)$, $\forall n, k$, are the Lagrange multipliers that account for constraint C2, i.e., that only one user is chosen in one time slot $n$, along with $\lambda(n), \forall n$, that accounts for constraint C1. $\gamma$ is the Lagrange multiplier related to the constraint on the average radiated power implied by C3. $\delta(n)$ and $\epsilon(k)$, $\forall n, k$, account for the maximum power transmitted from the base station during time slot $n$ in C4 and the minimum data rate requirements per user in C5, respectively. Furthermore, $\zeta_k(n)$, $\eta_k(n)$, and $\theta_k(n)$, $\forall n, k$, are associated with the constraints C6\textendash C8 related to the auxiliary optimization variable $P_k^{\text{virtual}}(n), \forall n, k$.
The dual problem is given by:
\begin{equation}\label{eq_2_4_3_dual}
\underset{\mathcal{D}>0}{\mathrm{minimize}} \underset{s_k(n), P_k'(n), P_k^{\text{virtual}}(n)}{\mathrm{maximize}} \mathscr{L}(P_k^{\text{virtual}}(n), P_k'(n), s_k(n), \mu_k(n), \beta_k(n), \mathcal{D}).
\end{equation}
It can be shown that Problem \ref{prob:EH_max_inner_loop_time_sharing_decoupling} satisfies the Slater's constraint qualification and strong duality holds. Thus, from the Karush-Kuhn-Tucker (KKT) optimality conditions, the gradient of the Lagrangian with respect to the elements of the optimization variables vanishes at the optimum point. First, we consider the derivatives of the Lagrangian with respect to the instances of the optimization variables $s_k(n), P_k'(n)$, and $P_k^{\text{virtual}}(n)$.
\begin{align}
\frac{\partial \mathscr{L}}{\partial P_k'(n)} &= \eta_k(n) - \gamma \frac{1}{T} - \delta (n) + \frac{1}{T} \frac{\epsilon(k) s_k(n)}{\ln 2}\frac{\frac{h_k(n)}{s_k(n)\sigma^2}}{1+\frac{P_k'(n)h_k(n)}{s_k(n)\sigma^2}},\label{eq_2_4_3_2:01} \\
\frac{\partial \mathscr{L}}{\partial s_k(n)} &= \frac{\epsilon(k)}{T \ln 2} \Big(\ln (1+\frac{P_k(n)h_k(n)}{\sigma^2}) - \frac{1}{1+\frac{P_k(n)h_k(n)}{\sigma^2}}\frac{P_k(n)h_k(n)}{\sigma^2}\Big) \nonumber \\ &+\varepsilon_k(n) - \lambda(n) - \alpha_k(n) - \zeta_k(n)P_{\text{max}},\label{eq_2_4_3_2:02} \\
\frac{\partial \mathscr{L}}{\partial P_k^{\text{virtual}}(n)} &= \theta_k(n) - \zeta_k(n) - \eta_k(n) + a u_k(n) \beta_k(n) h_k(n) e^{-a(P_k^{\text{virtual}}(n) h_k(n)-b)}.\label{eq_2_4_3_2:03}
\end{align}
By exploiting the fact that the derivative of the Lagrangian with respect to the optimization variable $P_k'(n)$ vanishes at the optimum point, from \eqref{eq_2_4_3_2:01}, we obtain the following
\begin{align}
\label{eq_2_4_3_3}
P_k'(n) &= s_k(n)P_k(n) \nonumber \\
&= s_k(n) \bigg[ \frac{\epsilon(k)}{\ln2 (\gamma + T\delta(n)-T\eta_k(n)) }-\frac{\sigma^2}{h_k(n)} \bigg]^{+}.
\end{align}
From \eqref{eq_2_4_3_2:03}, taking the derivative of the Lagrangian with respect to $P_k^{\text{virtual}}(n)$ yields:
\begin{align}
\label{eq_2_4_3_4}
P_k^{\text{virtual}}(n) = \bigg[ \frac{b}{h_k(n)}-\frac{1}{a h_k(n)} \ln \Big( \frac{\zeta_k(n) + \eta_k(n) - \theta_k(n)}{a u_k(n) \beta_k(n) h_k(n)}\Big)\bigg]^{+}.
\end{align}
The structure of the solution at the optimum point can be observed from the results presented above. In particular, it can be observed from \eqref{eq_2_4_3_3} that the power allocation in the system follows the water-filling solution. The dual variables show the costs for realizing the specific power allocation. Namely, in \eqref{eq_2_4_3_3}, we can observe that $P_k'(n)$ as an auxiliary variable is defined as the coupling between the true power allocation variable $P_k(n)$ and the user selection variable. Regarding the power allocation $P_k(n)$, which follows the water-filling policy, we can notice a ratio between the dual variables. Specifically, we can observe the dual variable dedicated to the rate constraint in the numerator and the dual variables connected to the power constraints from the transformed optimization problem in the denominator. The Lagrange multipliers $\epsilon(k)$, $\gamma$, $\delta(n)$, and $\eta_k(n)$, $\forall n, k$ make sure that the transmitter transmits with a sufficient amount of power to fulfill the data rate requirements, while satisfying the average and maximum power constrains. Moreover, in equation \eqref{eq_2_4_3_3} we can observe a part that is inverse to the channel value $h_k(n)$ for user $k$ and time slot $n$, which follows the form of water-filling, i.e., users with better channel conditions at a specific time slot are allocated more power.

\section{Results}

In this section, we present simulation results to illustrate the system performance of the proposed resource allocation algorithm with respect to the non-linear practical EH receiver model, proposed in Section \ref{practical_model}. The important parameters adopted in the simulation are summarized in Table \ref{table:parameters}.
\begin{table}[htb]
\caption{Simulation Parameters} \label{table:parameters}
\centering
\begin{tabular}{ | l | l | } \hline
      Carrier center frequency                           & 915 MHz \\ \hline
      Bandwidth                                          & ${\cal B}=200$ kHz \\ \hline
      Receiver antenna noise power                       & $\sigma^2 = -111.9 $ dBm \\ \hline
      Number of users K                                  & 10 / 15 \\ \hline
      Number of time slots T                             & 100  \\ \hline
      Transmit antenna gain                              & 18 dBi \\ \hline
      Receiver antenna gain                              & 0 dBi \\ \hline
      Path loss exponent                                 & 2 \\ \hline
      Rician factor                                      & 0 dB \\ \hline
      Reference and maximum service distance  (w.r.t. $P_{\text{max}}$)           & 10 meters \\ \hline
      Reference and maximum service distance  (w.r.t. distance)                          & 10 - 30 meters \\ \hline
      Constraint on average radiated power $P_{\text{av}}$                      & $0.2  P_{\text{max}}$ \\ \hline
      Maximum transmit power $P_{\text{max}}$ (w.r.t. $P_{\text{max}}$)         & $36 - 46$ dBm \\ \hline
      Maximum transmit power $P_{\text{max}}$ (w.r.t. distance)         & $46$ dBm \\ \hline
      Maximum harvested DC power for rectifying circuit - M				& $20$ mW \\ \hline
      EH circuit parameter - a								& $1500$ \\ \hline
      EH circuit parameter - b								& $0.0022$ \\ \hline
      Minimum required data rate per user $C_{\text{req}_k}$                    & $C_{\text{req}_1} = 0.5$ bit/s/Hz,  \\ \hline
      																			& $C_{\text{req}_i}=0 \text{bit/s/Hz}, i=1 \ldots K$ \\ \hline
\end{tabular}
\end{table}
Regarding the noise variances at the receiver antennas, we assume that they are identical at the information receiver and the EH receivers. The value for the corresponding noise power includes the effect of thermal noise at a temperature of 290 Kelvin and the processing noise. The results are simulated for 10 and 15 users in the system over $100$ time slots for computing the total average harvested power. We assume the path loss model defined in \cite{rappaport1996wireless}, with a path loss exponent of $2$. The multipath fading coefficients are modelled as independent and identically distributed Rician fading. The impedance of the antennas at the receivers is assumed to have perfect matching to the rectifying circuit such that there is no additional power losses. For the non-linear EH receiver model, i.e., the parameters for the non-linear EH circuits, we set $M = 20$ mW which corresponds to the maximum harvested power at a EH receiver. Besides, we adopt $a = 1500$ and $b = 0.0022$, which are similarly chosen as the parameters obtained by curve fitting for measurement data from \cite{le2008efficient}. The average system performance is achieved by averaging over different channel realizations.

Figure~\ref{fig:HP_distances} depicts the average total harvested power versus the distance of the users for $10$ and $15$ users. We assume $P_{\text{max}} = 46$ dBm for the maximum transmitted power at time slot $n$. Furthermore, we assume identical distance between the base station and all users.
\begin{figure}
\centering{\includegraphics[scale=0.7]{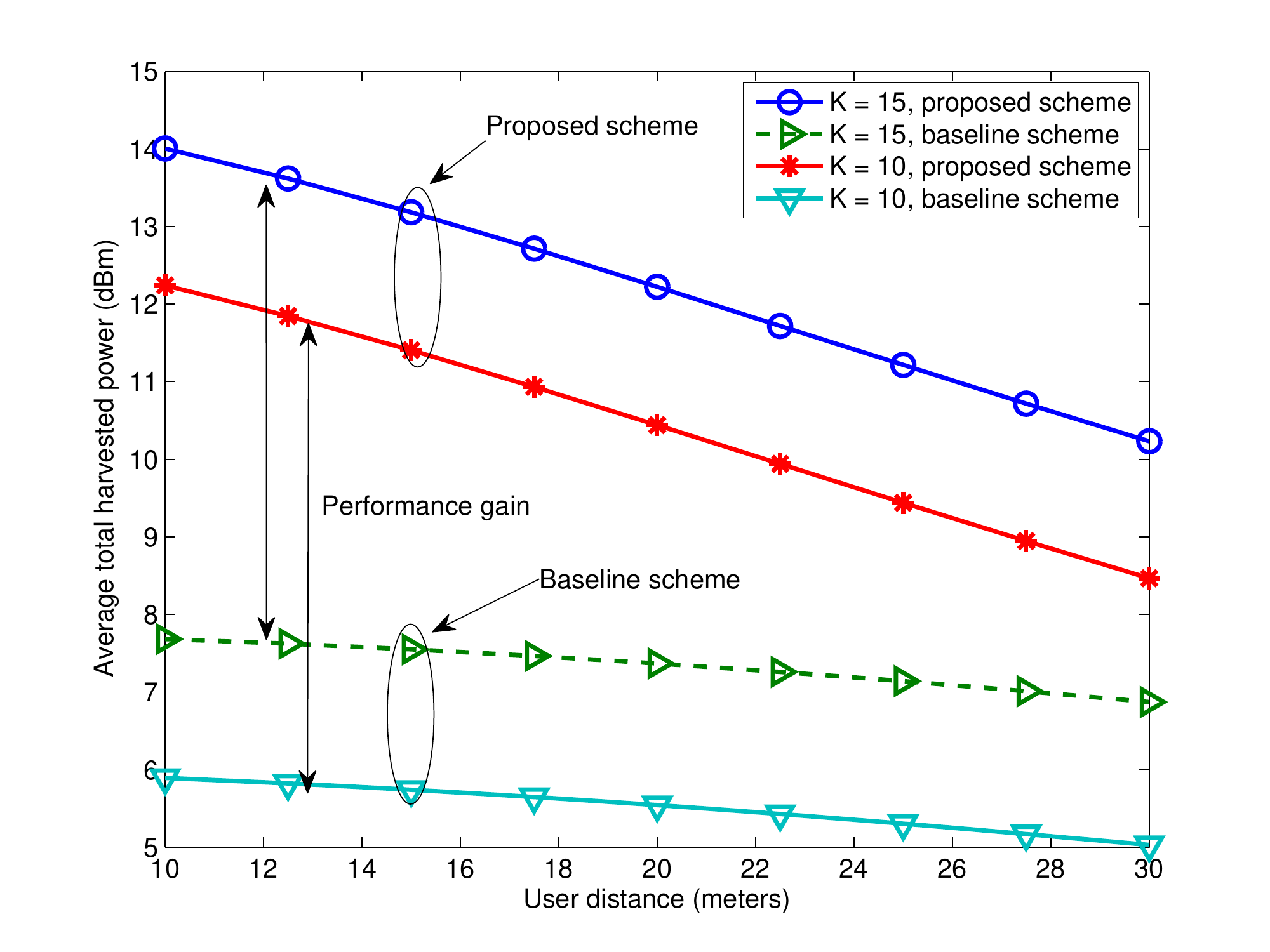}}
\caption{Average total harvested power versus user distances.}
\label{fig:HP_distances}
\end{figure}
It can be observed from Figure~\ref{fig:HP_distances} that the average total harvested power is a decreasing function with respect to the distance of the users. This is because the power density of the received signal decreases when the distance between the base station and the users is enlarged. Besides, the conversion efficiency of the EH circuit degrades at lower input RF power values. On the other hand, the total harvested power is increased when there are more users in the system, since a larger portion of the radiated power can be harvested by more EH receivers. For comparison, we also plot the performance of a baseline scheme in Figure~\ref{fig:HP_distances}. In particular, the baseline resource allocation algorithm is designed for the maximization of the total harvested power given by the conventional linear model in \eqref{eq_1_2_2} subject to the constraints from Problem \ref{prob:EH_max}. Due to resource allocation mismatch, the baseline scheme can only achieve an evidently smaller amount of total harvested power compared to the proposed scheme. The performance gap in the region of smaller user distance shows that the proposed scheme can harvest up to $6$ dB higher in total harvested power for the baseline scheme. On the other hand, when the distance between the base station and the users becomes larger, the performance of the proposed scheme converges to the performance of the linear scheme. The reason for this is that low input power levels lead to poor performance and underutilization at the EH receivers due to the behaviour of the non-linear EH circuits. Moreover, the QoS constraint at the ID users must be satisfied, which further reduces the portion of radiated power that can be harvested at the ERs.
\begin{figure}
\centering{\includegraphics[scale=0.7]{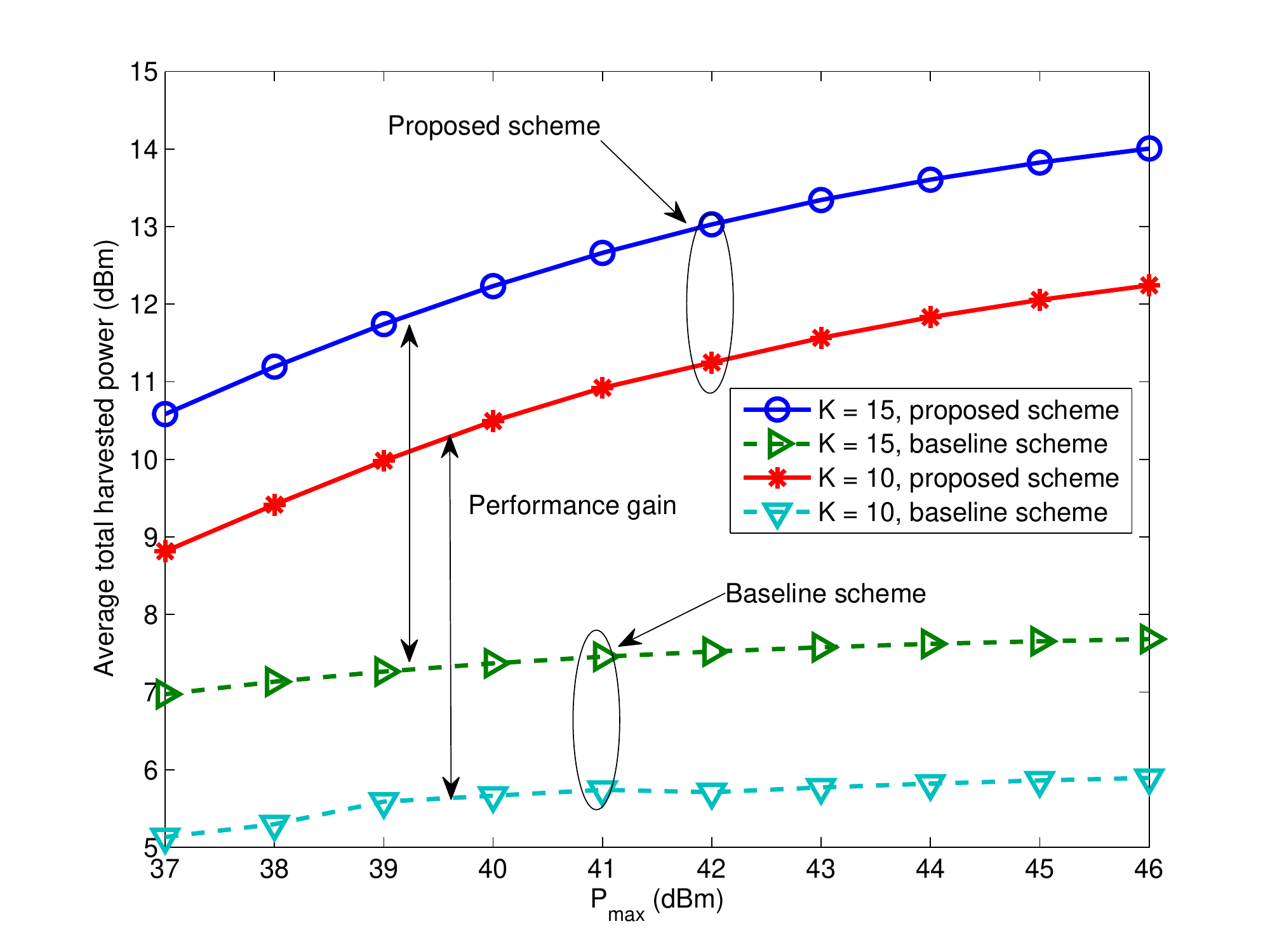}}
\caption{Average total harvested power versus the maximum transmitter power at each time slot $P_{\text{max}}$.}
\label{fig:HP_p_max}
\end{figure}

Figure~\ref{fig:HP_p_max} shows the average total harvested power versus the maximum transmitted power $P_{\text{max}}$. We assume that the distance of the users is $10$ meters. The performance of the considered SWIPT system increases in both schemes when more users are present. It can also be observed that the average total harvested power is an increasing function with respect to the maximum transmitter power for the proposed scheme. However, it is expected that this increasing trend exists only until the saturation region, i.e., the value of maximum harvested power at all the EH receivers, is reached. On the other hand, we can observe that the average total harvested power is almost constant with respect to $P_{\text{max}}$ for the baseline scheme. In particular, the baseline scheme may cause saturation in EH in some EH receivers and underutilization of other EH receivers due to the non-linear EH circuits. Furthermore, the baseline scheme allocates all the resources to the best user, i.e., the user with the best channel conditions. The proposed resource allocation algorithm designed for the non-linear model, on the other hand, allocates the available power to all the EH receivers over time, in order to avoid saturation and underutilization. As $P_{\text{max}}$ is increased, the resource allocation mismatch for the baseline scheme persists. Thus, the performance gap between the proposed scheme and the baseline scheme is enlarged, when considering the range of values for $P_{\text{max}}$ shown in \ref{fig:HP_p_max}. 
%
\chapter{Conclusion}
\label{chap:3_conclusion}
In this thesis, we focused on the design of a resource allocation algorithm for a SWIPT system, assuming a practical non-linear model for the EH receivers. In Chapter \ref{chap:1_Introduction}, we introduced the concept of SWIPT/WIPT systems and discussed the importance of realistic modelling of the EH receiver in SWIPT/WIPT systems. Moreover, we presented a survey of the existing linear EH receiver model used in the literature. In Chapter \ref{chap:2_EH_Model}, we proposed the practical non-linear model for representing the RF-to-DC power conversion at the EH receivers. Using this model, we formulated the optimization problem for the maximization of the average total harvested power at the EH receivers. To solve the optimization problem, we designed a computationally efficient iterative resource allocation algorithm that achieves the globally optimal solution. The resulting resource allocation for the considered SWIPT with respect to the proposed practical non-linear EH receiver model demonstrated substantial performance gains when compared to the traditional linear EH receiver model.

In the future work, we are interested in using the proposed practical non-linear EH receiver model for designing resource allocation algorithms in different variations of SWIPT systems, for instance, in the case with imperfect CSI. Furthermore, we are interested in investigating the possible performance gain of the proposed model in different concepts with EH, such as wireless powered networks.

\bibliographystyle{IEEEtran}
\bibliography{references_elena}

\begin{appendices}
\chapter{Calculations}
\section{Proof of Theorem \ref{thm:non_linear_sum_of_ratios}}
In order to prove Theorem \ref{thm:non_linear_sum_of_ratios}, a convex constraint set is required. Thus, we start with Problem \ref{prob:EH_max_psi}, and apply time-sharing relaxation for constraint C1, along with introducing new auxiliary variables, similarly as in subsection \ref{sub_non_linear}. The result is the following optimization problem:

\begin{Prob} EH Maximization - Time-sharing relaxation and decoupling: \label{prob:EH_max_param_appendix}
\begin{align}
\underset{s_k(n), P_k'(n), P_k^{\mathrm{virtual}}(n)}{\mathrm{maximize}} \,\, & \sum_{n=1}^{T} \sum_{k=1}^{K}  \Bigg( \frac{M}{ 1 + e^{-\mathrm{a}(P_k^{\mathrm{virtual}}(n) h_k(n)-\mathrm{b})}}  \Bigg)\\
\mathrm{subject\,\,to }\,\,& \widetilde{\mathrm{C1:}} \ 0 \leq s_k(n) \leq 1 , \forall n, k , \nonumber \\
& \mathrm{C2:} \ \sum_{k=1}^{K} s_k(n) \leq 1 , \forall n,  \nonumber \\
& \mathrm{C3:} \ \frac{1}{T} \sum_{n=1}^{T}  \sum_{k=1}^{K} P_k'(n) \leq P_{\mathrm{av}},   \nonumber \\
& \mathrm{C4:} \ \sum_{k=1}^{K} P_k'(n) \leq P_{\mathrm{max}}, \forall n, \nonumber \\
& \mathrm{C5:} \ \frac{1}{T}\sum_{n=1}^{T} s_k(n) \log_2 \Big( 1+ \frac{P_k'(n)h_k(n)}{s_k(n)\sigma^2} \Big) \geq C_{\mathrm{req}_{k}}, \forall k,  \nonumber \\
& \mathrm{C6:} \  P_k^{\mathrm{virtual}}(n) \leq (1-s_k(n))P_{\mathrm{max}}, \forall n,k,\nonumber \\
& \mathrm{C7:} \ P_k^{\mathrm{virtual}}(n) \leq \sum_{k=1}^{K} P_k'(n) ,\forall n, k,  \nonumber \\
& \mathrm{C8:} \  P_k^{\mathrm{virtual}}(n) \geq 0, \forall n, k.  \nonumber 
\end{align}
\end{Prob}

Following the parametric algorithm in \cite{geisser2006modes}, the optimization Problem \ref{prob:EH_max_param_appendix} is equivalent to the following problem:

\begin{Prob} EH Maximization - Equivalent parametric optimization: \label{prob:EH_max_param}
\begin{align}
\underset{s_k(n), P_k'(n), P_k^{\mathrm{virtual}}(n)}{\mathrm{maximize}} \,\, & \sum_{n=1}^{T} \sum_{k=1}^{K} \beta_k(n) \label{eq_a_2_1} \\
\mathrm{subject \,\, to} \,\,& \widetilde{\mathrm{C1}} - \mathrm{C8} ,  \nonumber \\
& \mathrm{C9:} \,\, \Big( M - \beta_k(n)(1-e^{-a(P_k^{\mathrm{virtual}}(n)h_k(n)-b)}) \Big) \geq 0 \,\, \forall n,k. \nonumber
\end{align}
\end{Prob}

Following the outline of the proof of Lemma 2.1 in \cite{jonga2012efficient}, we define the following function for Problem \ref{prob:EH_max_param}:
\begin{align}
&\mathscr{L}(P_k^{\text{virtual}}(n), P_k'(n), s_k(n), \varpi, \mu_k(n), \beta_k(n), \mathcal{D}) \label{eq_a_2_2} \\
&= \varpi \sum_{n=1}^{T} \sum_{k=1}^{K} \beta_k(n) + \sum_{n=1}^{T} \sum_{k=1}^{K} \mu_k(n) \Big( \text{M}-\beta_k(n)\big( 1+e^{-\text{a}( P_k^{\text{virtual}}(n)h_k(n)- \text{b})}\big) \Big) \nonumber \\ &- \sum_{n=1}^{T}\lambda(n)\Big( \sum_{k=1}^{K} s_k(n)-1\Big) -\sum_{n=1}^{T} \sum_{k=1}^{K}\alpha_k(n)\Big( s_k(n)-1\Big) \nonumber \\&+\sum_{n=1}^{T} \sum_{k=1}^{K}\varepsilon_k(n)s_k(n)-\gamma \Big( \frac{1}{T} \sum_{n=1}^{T} \sum_{k=1}^{K}P_k'(n) - P_{\text{av}} \Big) \nonumber \\&-\sum_{n=1}^{T}\delta(n)\Big( \sum_{k=1}^{K}P_k'(n)-P_{\text{max}}\Big) -\sum_{k=1}^{K} \epsilon(k)\Big( C_{\text{req}_{k}}-\frac{1}{T}\sum_{n=1}^{T} s_k(n)\log_2\big(1+\frac{P_k'(n)h_k(n)}{s_k(n)\sigma^2}\big)\Big) \nonumber \\&-\sum_{n=1}^{T} \sum_{k=1}^{K}\zeta_k(n)\Big( P_k^{\text{virtual}}(n)-\big(1-s_k(n)\big)P_{\text{max}}\Big) \nonumber \\ &-\sum_{n=1}^{T} \sum_{k=1}^{K}\eta_k(n)\Big( P_k^{\text{virtual}}(n)-\sum_{k=1}^{K} P_k'(n)\Big) +\sum_{n=1}^{T} \sum_{k=1}^{K}\theta_k(n)P_k^{\text{virtual}}(n). \nonumber
\end{align}
The set $\mathcal{D} = \{\alpha_k(n), \lambda(n), \varepsilon_k(n), \gamma, \delta(n), \epsilon(k), \zeta_k(n), \eta_k(n), \theta_k(n)\}$, $\forall n, k$, contains all the dual variables and is defined solely for notational simplicity.
According to Fritz-John optimality conditions \cite{bazaraa2013nonlinear}, there must exist variables $\varpi^*$, $\alpha_k^*(n)$, $\lambda^*(n)$, $\varepsilon_k^*(n)$, $\gamma^*$, $\delta^*(n)$, $\epsilon^*(k)$, $\zeta_k^*(n)$, $\eta_k^*(n)$, $\theta_k^*(n)$, and $\mu_k^*(n)$ such that they satisfy 
\begin{align}
\frac{\partial \mathscr{L}}{\partial \beta_k(n)} &= \varpi^*-\mu^*_k(n)\big( 1+e^{-\text{a}( P_{k^*}^{\text{virtual}}(n)h_k(n)- \text{b})}\big) = 0, \forall n,k, \label{eq_a_2_3:01} \\ 
\mu_k^*(n) \frac{\partial \mathscr{L}}{\partial \mu_k(n)} &= \mu_k^*(n)\big(\mathrm{M}-\beta_k^*(n)\big( 1+e^{-\text{a}( P_{k^*}^{\text{virtual}}(n)h_k(n)- \text{b})}\big) \big) = 0, \forall n,k, \label{eq_a_2_3:02} \\
\frac{\partial \mathscr{L}}{\partial P_k^{\text{virtual}}(n)} &= \frac{\partial \mathscr{L}}{\partial P_k'(n)} = \frac{\partial \mathscr{L}}{\partial s_k(n)} = 0, \forall n,k, \label{eq_a_2_3:03} \\ 
\alpha_k^*(n) \frac{\partial \mathscr{L}}{\partial \alpha_k(n)} &= \lambda^*(n) \frac{\partial \mathscr{L}}{\partial \lambda(n)} = \varepsilon_k^*(n) \frac{\partial \mathscr{L}}{\partial \varepsilon_k(n)} = \gamma^* \frac{\partial \mathscr{L}}{\partial \gamma} = \delta^*(n) \frac{\partial \mathscr{L}}{\partial \delta(n)} \nonumber \\
&= \epsilon_k^* \frac{\partial \mathscr{L}}{\partial \epsilon_k} = \zeta_k^*(n) \frac{\partial \mathscr{L}}{\partial \zeta_k(n)} = \eta_k^*(n) \frac{\partial \mathscr{L}}{\partial \eta_k(n)} = \theta_k^*(n) \frac{\partial \mathscr{L}}{\partial \theta_k(n)} = 0, \forall n,k,\label{eq_a_2_3:04} \\
\varpi^*, \gamma^* \geq 0, & \,\, \alpha_k^*(n), \lambda^*(n), \varepsilon_k^*(n), \delta^*(n), \epsilon^*(k), \zeta_k^*(n), \eta_k^*(n), \theta_k^*(n) \geq 0, \forall n, k. \label{eq_a_2_3:05} 
\end{align}
Now let us suppose that $\varpi^* = 0$. From \eqref{eq_a_2_3:01}, it follows that $\mu_k^*(n) = 0$ as $(1+e^{-\text{a}( P_k^{\text{virtual}}(n)h_k(n)- \text{b})}) > 0$  for all \{ $P_k^{\text{virtual}}(n), P_k'(n), s_k(n)$\}. Thus, it follows from \eqref{eq_a_2_3:03}, and \eqref{eq_a_2_3:04} that
\begin{align}
\sum_{n} \sum_{k} \delta(n)^* \nabla g_k(n)\big(P_{k^*}'(n)\big) = 0, \forall n, k \in I\big(P_{k^*}'(n)\big), \label{eq_a_2_4:01} \\
\sum_{n} \sum_{k} \delta(n)^* > 0, \delta(n)^* \geq 0, \forall n, k \in I\big(P_{k^*}'(n)\big), \label{eq_a_2_4:02}
\end{align}
where $I\big(P_{k^*}'(n)\big) = \bigg\{ n, k | \sum_{k=1}^{K} P_{k^*}'(n) - P_{\text{max}} = 0, \forall n \bigg\}$. Moreover, $g_k(n)\big(P_{k^*}'(n)\big) = P_{\text{max}} - \sum_{k=1}^{K} P_{k^*}'(n)$. From the Slater's condition, there exist a $\overline{P_k'(n)}$ such that 
\begin{equation}
\label{eq_a_2_5}
 g_k(n)\big(\overline{P_k'(n)}\big) < 0, \forall n, k. 
\end{equation}
Since $g_k(n)\big(P_{k^*}'(n)\big)$ are convex $\forall n, k$, we obtain the following :
\begin{equation}\label{eq_a_2_6}
\nabla g_k(n)\big(P_{k^*}'(n)\big)^{T}\big(\overline{P_k'(n)} - P_{k^*}'(n)\big) \leq g_k(n)\big(\overline{P_{k}'(n)}\big) - g_k(n)\big(P_{k^*}'(n)\big) < 0, \forall n, k \in I\big(P_{k^*}'(n)\big). 
\end{equation}
Moreover, from \eqref{eq_a_2_4:02}, and \eqref{eq_a_2_6}, we obtain:
\begin{equation}\label{eq_2_4_1_9}
\Big( \sum_{n} \sum_{k} \delta_(n)^* \nabla g_k(n)\big(P_{k^*}'(n)\big) \Big)^{T} \big(\overline{P_k'(n)} - P_{k^*}'(n)\big) < 0, \forall n, k \in I\big(P_{k^*}'(n)\big), 
\end{equation}
which contradicts with \eqref{eq_a_2_4:01}. Thus, by contradiction, it follows that $\varpi^* > 0$. By denoting $\frac{\mu_k^*(n)}{\varpi}$ and $\frac{\delta_k^*(n)}{\varpi }$ as $\mu_k^*(n)$ and $\delta_k^*(n)$, respectively, it can be observed that \eqref{eq_a_2_3:01} is equivalent to \eqref{eq_2_4_1_2:02}, and \eqref{eq_a_2_3:02} to \eqref{eq_2_4_1_2:01}. Furthermore, \eqref{eq_a_2_3:03}, and \eqref{eq_a_2_3:04} are the Karush-Kuhn-Tucker (KKT) conditions for Problem \ref{prob:EH_max_param_appendix}, for given $\mu_k^*(n), \beta_k^*(n) > 0 $ \cite{book:convex}. Therefore, the variable set \{$P_{k^*}^{\text{virtual}}(n), P_{k^*}'(n), s_k^*(n)$\} is the optimal solution to the equivalent transformed problem for $\mu_k(n) = \mu_k^*(n)$, and $\beta_k(n) = \beta_k^*(n)$, $ \forall n, k $. Thus, by solving the transformed Problem \ref{prob:EH_max_inner_loop_time_sharing_decoupling}, we can readily obtain the solution to the initial Problem \ref{prob:EH_max_psi}, assuming that Problem \ref{prob:EH_max_inner_loop_time_sharing_decoupling} can be solved in each iteration.\label{theorem_proof}
\section{Proof of Theorem \ref{thm:appendix}}\label{app:proof_time_sharing}
We use a similar approach as in \cite{wong1999multiuser}, in order to prove Theorem \ref{thm:appendix}. First, we introduce the Lagrangian for Problem \ref{prob:EH_max_inner_loop_time_sharing_decoupling} in the following:
\begin{align}
\label{eq_a_01}
&\mathscr{L}(P_k^{\text{virtual}}(n), P_k'(n), s_k(n), \mu_k(n), \beta_k(n), \mathcal{D}) \\
&= \sum_{n=1}^{T} \sum_{k=1}^{K} \mu_k(n) \Big( \text{M}-\beta_k(n)\big( 1+e^{-\text{a}( P_k^{\text{virtual}}(n)h_k(n)- \text{b})}\big)\Big) \nonumber \\ &- \sum_{n=1}^{T}\lambda(n)\Big( \sum_{k=1}^{K} s_k(n)-1\Big) -\sum_{n=1}^{T} \sum_{k=1}^{K}\alpha_k(n)\Big( s_k(n)-1\Big) \nonumber \\&+\sum_{n=1}^{T} \sum_{k=1}^{K}\varepsilon_k(n)s_k(n)-\gamma \Big( \frac{1}{T} \sum_{n=1}^{T} \sum_{k=1}^{K}P_k'(n) - P_{\text{av}} \Big) \nonumber \\&-\sum_{n=1}^{T}\delta(n)\Big( \sum_{k=1}^{K}P_k'(n)-P_{\text{max}}\Big) -\sum_{k=1}^{K} \epsilon(k)\Big( C_{\text{req}_{k}}-\frac{1}{T}\sum_{n=1}^{T} s_k(n)\log_2\big(1+\frac{P_k'(n)h_k(n)}{s_k(n)\sigma^2}\big)\Big) \nonumber \\&-\sum_{n=1}^{T} \sum_{k=1}^{K}\zeta_k(n)\Big( P_k^{\text{virtual}}(n)-\big(1-s_k(n)\big)P_{\text{max}}\Big) \nonumber \\ &-\sum_{n=1}^{T} \sum_{k=1}^{K}\eta_k(n)\Big( P_k^{\text{virtual}}(n)-\sum_{k=1}^{K} P_k'(n)\Big) +\sum_{n=1}^{T} \sum_{k=1}^{K}\theta_k(n)P_k^{\text{virtual}}(n), \nonumber
\end{align}
where $\mathcal{D} = \{\alpha_k(n), \lambda(n), \varepsilon_k(n), \gamma, \delta(n), \epsilon(k), \zeta_k(n), \eta_k(n), \theta_k(n)\}$, $\forall n, k,$ is the set containing all the Lagrange multipliers related to the constraints in Problem \ref{prob:EH_max_inner_loop_time_sharing_decoupling}.
From \eqref{eq_a_01}, we obtain the derivative of the Lagrangian with respect to the user selection variable:
\begin{align} \label{eq_a_02}
\frac{\partial \mathscr{L}}{\partial s_k(n)} &= \frac{\epsilon(k)}{T \ln 2} \Big(\ln (1+\frac{P_k(n)h_k(n)}{\sigma^2}) - \frac{1}{1+\frac{P_k(n)h_k(n)}{\sigma^2}}\frac{P_k(n)h_k(n)}{\sigma^2}\Big)\\ &+\varepsilon_k(n) - \lambda(n) - \alpha_k(n) - \zeta_k(n)P_{\text{max}} \nonumber
\end{align}
Using the facts presented about the value of the derivative, and the necessary conditions from \cite{wong1999multiuser}, we state the following:
\begin{align}
\label{eq_a:03}
\frac{\partial \mathscr{L}}{\partial s_k(n)} &= \frac{\epsilon(k)}{T \ln 2} \Big(\ln \big( 1+\frac{P_k(n)h_k(n)}{\sigma^2}\big) - \frac{\frac{P_k(n)h_k(n)}{\sigma^2}}{1+\frac{P_k(n)h_k(n)}{\sigma^2}} \Big)\nonumber \\ & - \lambda(n) - \alpha_k(n) + \varepsilon_k(n) -\zeta_k(n)P_{\text{max}} \begin{cases}
  < 0, & \text{if } s_{k^*}(n)=0, \\
  =0, & \text{if } s_{k^*}(n) \in (0,1),\\
  > 0, & \text{if } s_{k^*}(n)=1.
\end{cases}
\end{align}
In order to focus on the impact of the dual variable $\lambda(n)$ related to constraint C2 in Problem \ref{prob:EH_max_inner_loop_time_sharing_decoupling}, we consider the other dual variables $\alpha_k(n),\varepsilon_k(n),\zeta_k(n),\epsilon(k)$ as constants, $ \forall n, k$. In the following, the derivative is rewritten as: 
\begin{align}
\label{eq_a_04}
\frac{\partial \mathscr{L}}{\partial s_k(n)} &= \frac{\epsilon(k)}{T \ln 2} F_k(n)-\lambda(n) + y \begin{cases}
  < 0, & \text{if } s_{k^*}(n)=0, \\
  =0, & \text{if } s_{k^*}(n) \in (0,1),\\
  > 0, & \text{if } s_{k^*}(n)=1,
\end{cases}
\end{align}
where $y= \varepsilon_{k^*}(n) - \alpha_{k^*}(n) -\zeta_{k^*}(n)P_{\text{max}}$, and $F_k(n)=\Big(\ln \big( 1+\frac{P_k(n)h_k(n)}{\sigma^2}\big) - \frac{\frac{P_k(n)h_k(n)}{\sigma^2}}{1+\frac{P_k(n)h_k(n)}{\sigma^2}} \Big)$, $\forall n, k$.
Now, we obtain the following for the user selection variable:
\begin{align}
\label{eq_a_05}
s_{k^*}(n) = \begin{cases}
  0, & \text{if } \lambda(n) > \frac{\epsilon(k)}{T \ln 2} F_k(n) + y , \\
  1, & \text{if } \lambda(n) < \frac{\epsilon(k)}{T \ln 2} F_k(n) + y.
\end{cases}
\end{align}
At the optimum point constraint C2 must be satisfied with equality. Moreover, we aim for maximizing the Lagrangian function, which follows from the dual problem formulation \cite{book:convex}. Thus, if $F_k(n)$ are different for every user $k$ and time slot $n$, the decision about the optimal user selection is obtained according to:
\begin{align}
\label{eq_a_06}
s_{k^*}(n) &= 1 , \\ s_{k}(n)& = 0, \forall k \neq k^* \text{and} \nonumber \\
k^*&=\text{arg}\max_{k} \Big( \frac{\epsilon(k)}{T \ln 2} F_k(n) + y \Big) , \forall n.
\end{align}
$F_k(n)$ is referred to as the marginal benefit achieved by the system by selecting user $k$. We note that the Lagrange multipliers in the scheduling policy  depend only on the statistics of the
channels. Hence, they can be calculated offline, e.g. using the gradient method, and then be used for online scheduling as long
as the channel statistics remain unchanged. As a result, the optimal scheduling rule in \eqref{eq_a_05} depends only on the CSI in the current time slot and the channel statistics, i.e., online scheduling is optimal, although the optimization problem considers an infinite number of time slots and long-term averages for the total harvested energy. At the optimum solution, the optimal value for the user selection variable is strictly $1$ or $0$, $\forall n, k$, in the case of ID and EH, respectively. Due to the fact that the optimal selection converges to the boundary values at the optimum point, the time-sharing relaxation is proven to be tight.\label{time_sharing_proof}
\section{Proof of Proposition \ref{prop:equivalency}}\label{app:proof_proposition}

In order to prove Proposition \ref{prop:equivalency}, we need to prove the validity of all the transformations performed on the original Problem \ref{prob:EH_max_psi}. 

Firstly, we introduced the transformation of the objective function in Problem \ref{prob:EH_max_inner_loop}. Since Theorem \ref{thm:non_linear_sum_of_ratios} was proven in Appendix \ref{theorem_proof}, by obtaining the solution of Problem \ref{prob:EH_max_inner_loop}, we obtain the solution of the initial Problem \ref{prob:EH_max_psi}. This holds if Problem \ref{prob:EH_max_inner_loop} can be solved in each iteration of the resource allocation algorithm. For that purpose, we applied time-sharing relaxation for constraint C1 of Problem \ref{prob:EH_max_inner_loop} and decoupling of the optimization variables in Problem \ref{prob:EH_max_inner_loop_time_sharing_decoupling}. The tightness of the time-sharing relaxation was proven in Appendix \ref{time_sharing_proof}, while the change of variables can be easily reversed. 

With proving the equivalence of all the transformations, introduced in order to obtain the convex optimization Problem \ref{prob:EH_max_inner_loop_time_sharing_decoupling}, it was proven that by obtaining the solution to the transformed Problem \ref{prob:EH_max_inner_loop_time_sharing_decoupling} in each iteration, we obtain the solution to the initial Problem \ref{prob:EH_max_psi}. Thus, we prove the validity of Proposition \ref{prop:equivalency}.\label{prop_proof}

\end{appendices}

%

\end{document}